\newtheorem{theorem}{Theorem}
\newtheorem{definition}{Definition}
\newtheorem{lemma}{Lemma}
\newtheorem{corollary}{Corollary}
\newtheorem{remark}{Remark}
\newtheorem{assumption}{Assumption}
\newtheorem{proof}{Proof}
\newtheorem{proof of Theorem 1}{Proof of Theorem 1}
\begin{document}

\title{Online Distributed Zeroth-Order Optimization With Non-Zero-Mean Adverse Noises}

\author{Yanfu Qin and Kaihong Lu
\thanks{ }
\IEEEcompsocitemizethanks{
	\IEEEcompsocthanksitem Corresponding author: Kaihong Lu
	\IEEEcompsocthanksitem Y. Qin and K. Lu are with the College of Electrical Engineering and Automation, Shandong University of Science and Technology, Qingdao 266590, China. (e-mail: qinyan{\_}fu@163.com; khong{\_}lu@163.com) }
}
\maketitle

\begin{abstract}

In this paper, the problem of online distributed zeroth-order optimization subject to a set constraint is studied via a multi-agent network, where each agent can communicate with its immediate neighbors via a time-varying directed graph.
Different from the existing works on online distributed zeroth-order optimization, we consider the case where the estimate on the gradients are influenced by some non-zero-mean adverse noises.
To handle this problem, we propose a new online distributed zeroth-order mirror descent algorithm involving a kernel function-based estimator and a clipped strategy. Particularly, in the estimator, the kernel function-based strategy is provided to deal with the adverse noises, and eliminate the low-order terms in the Taylor expansions of the objective functions.
Furthermore, the performance of the presented algorithm is measured by employing the dynamic regrets, where the offline benchmarks are to find the optimal point at each time. Under the mild assumptions on the graph and the objective functions, we prove that if the variation in the optimal point sequence grows at a certain rate, then the high probability bound of the dynamic regrets increases sublinearly. 	
Finally, a simulation experiment is worked out to demonstrate the effectiveness of our theoretical results.

%Existing distributed constrained OCO algorithms require clients to send their raw decisions to the server, leading to large communication overhead unaffordable in many applications.

\end{abstract}

\begin{keywords}
  Multi-agent system, online distributed optimization, zeroth-order optimization, adverse noise.
\end{keywords}

\IEEEpeerreviewmaketitle
\section{Introduction}
\IEEEPARstart{I}{n} online distributed optimization, the goal of agents is to cooperatively minimize the sum of objective functions in dynamic environments \cite{8015179}.
In recent years, online distributed optimization has been received increasing attention \cite{7172037}, \cite{7399359}, \cite{7479495}, \cite{8421588}, \cite{9013030}, \cite{8710288}, \cite{9851519}, \cite{10025380}, \cite{10664010}.
This is due to its wide applications in many areas such as Internet of things \cite{8501581}, smart grid \cite{9585464},  robot formation \cite{STOMBERG2023105579}.

% 第二段：在线分布式优化
%Recently, various results on online distributed optimization have been achieved.
An online algorithm should mimic the performance of its offline counterpart, and the gap between them is called the regret \cite{7172037}. In \cite{7172037}-\cite{9013030} the static regret, whose offline benchmark is to minimize the average of global objective functions of all time, is used to measure the performance of online distributed algorithms.
%Under the presented algorithms in \cite{bibid}-\cite{bibid}, the regrets sublinearly increase.
While in \cite{8710288}-\cite{10664010} the dynamic regret, whose offline benchmark is to  minimize the global objective function at each time, is used to capture the performance of online distributed algorithms.
Undoubtedly, the offline benchmark of the dynamic regrets is more stringent than that of the static ones.

% 第三段：介绍 zeroth-order
All works in \cite{7172037}–\cite{10664010} assume that each agent can access the real gradient information of its objective function.
However, computing the real gradients usually takes expensive costs, even is impossible in some applications \cite{10480627}.
For the cases where real gradients of the objective functions are not available, the gradient can be estimated by using zeroth-order estimate methods. Accordingly, the optimization problems are called zeroth-order optimization \cite{7298442}.
Recently, online distributed zeroth-order optimization has been extensively studied.
For example, for online distributed zeroth-order optimization without constraints, a contextual learning algorithm based on the multi-point estimation is proposed in \cite{7103356}, and a quantized distributed algorithm based on the one-point estimation is proposed in \cite{YUAN2022110590}.
For the case with time-varying coupled inequality constraints, distributed primal-dual algorithms based on the one-point estimate strategy and the two-point estimate strategy are proposed in \cite{9222230}.
For online distributed zeroth-order optimization with long-term constraints, a distributed primal-dual algorithm based on the one-point estimation is proposed \cite{9349205}. Moreover, with the coupled inequality constraints considered, a modified saddle-point algorithm based on the two-point estimate strategy is proposed in \cite{9806334}.
For online distributed zeroth-order optimization with nonconvex and nonsmooth objective functions, a multi epoch distributed algorithm based on the two-point estimation is proposed in \cite{pmlr-v235-sahinoglu24a}.
%For distributed first- and zeroth-order hybrid optimization, a gradient descent algorithm based on two-point is proposed in \cite{Talaei_Ansaripour_Nadiradze_Alistarh_2025}.

% 第四段：对zeroth-order的总结与不足，引入这篇论文要解决的问题

The above study focuses on online distributed optimization problems without adverse noises.
% 对抗噪声
Unfortunately, the estimator is usually influenced by adverse noise in practical applications.
For example, in the image classification problems, misclassification are made in deep neural networks due to the fact that the datasets are often polluted by adverse noises \cite{Tu_Ting2019}.
For distributed zeroth-order optimization with zero-mean noises, an distributed Kiefer-Wolfowitz stochastic approximation algorithm is proposed in \cite{8619044}.
For distributed zeroth-order optimization with sub-Gaussian noises, an distributed algorithm based on Gaussian process method is proposed in \cite{11016034}.
%高斯逼近误差大
It is worth pointing out that, all the aforementioned results on online distributed zeroth-order optimization are achieved by using the Gaussian approximation. The estimated error between the real gradient and that of the objective function's Gaussian approximation is linear with a constant smoothness coefficient, which results in a large error bound and causes a bad convergence performance. To improve the convergence performance, developing new distributed zeroth-order estimate methods to reduce the estimate errors are desired.

In this paper,  the problem of online distributed zeroth-order optimization is study via a multi-agent system.
% 问题
When making decisions, each agent only has to access the zeroth-order information of  its own objective function and the set constraint, and can exchange local state information with its immediate neighbors via a time-varying directed graph.
Different from \cite{7103356}-\cite{pmlr-v235-sahinoglu24a}, here adverse noises are considered in zeroth-order gradients. Moreover, we consider the case where the means of the noises considered in this paper are not zero, as opposed to the cases studied in the offline distributed optimization \cite{8619044}, \cite{11016034}.
% 方法
To handle the problem, we propose an online distributed zeroth-order mirror descent algorithm based on the kernel function-based estimator and clipped strategy.
In the estimator, the kernel function of a noise following the uniform distribution is used to deal with the adverse noises, and eliminate the low-order terms in the Taylor expansions of the objective functions.
Using the kernel function-based strategy, the estimate errors scale with a high order term of the estimating coefficient.
Compared with those achieved by zeroth-order methods based on the Gaussian approximation in \cite{7103356}-\cite{pmlr-v235-sahinoglu24a}, \cite{11016034} the bounds of the estimate errors are much smaller.
% 结论
Furthermore, dynamic regret is employed to measure the performance of the online algorithm.
Different from \cite{7103356}-\cite{pmlr-v235-sahinoglu24a}, \cite{8619044}, \cite{11016034} where the expectation bounds of the dynamic regrets are analyzed, we study the high probability bound of the dynamic regrets, which help ensure the effectiveness of running the online algorithms in a few rounds. We prove that if the digraph is uniformly strongly connected and if the increasing rate of the variation in the optimal value sequence is slower than $\mathcal{O}(T^{1+b})$,  then the high probability bound of the dynamic regrets increases sublinearly.

{\emph{\textbf{Notations}}}. Throughout this paper, $\mathbb{R},~\mathbb{R}^m$ denote the set of real numbers and the space of $m$-dimensional real column vectors, respectively.
$e_i$ denotes the unit vector whose $i$-th element is $1$ and all other elements are $0$.
$\lfloor x\rfloor$ denotes the largest integer less than $x$.
For any positive integer $T$, $\lceil T \rceil$ denotes the sequence $\{1,\cdots T\}$.
$\langle x,y\rangle$ denotes the inner product of vectors $x$ and $y$.
$\left[x\right]_k$ denotes the $k$-th element of the vector $x$.
$\left[A\right]_i$ denotes the $i$-th row of the matrix $A$.
$\mathbb{P}[\cdot]$ denotes the probability of a random event.

\section{Problem formulation}\label{se2}

\subsection{Graph theory}

Consider a time-varying directed graph $\mathcal{G}(t)=(\mathcal{V}, \mathcal{E}(t), A(t))$, where $\mathcal{V}=\left\lbrace 1,\cdotp\cdotp\cdotp,n \right\rbrace $ represents a set of vertices, $\mathcal{E}(t)$ represents a set of edges, and $A(t)=(a_{ij}(t))_{n\times n}$ represents a weight matrix. If $(j,i)\in\mathcal{E}(t)$ then $l \leq a_{ij}(t)\leq 1$ for some $0< l < 1$  and $a_{ij}(t)=0$ otherwise.
$\mathcal{N}_i(t)=\left\lbrace j|(j, i)\in\mathcal{E}(t)\right\rbrace\cup\left\lbrace i \right\rbrace $ is used to represent the neighbor set of $i$.
$\mathcal{G}(t)$ is strongly connected if there exists a directed path between each pair of nodes. For $\mathcal{G}(t)$, defined the edge set as $\mathcal{E}_U(k)=\bigcup_{t=kU}^{(k+1)U-1}\mathcal{E}(t)$ for some positive integer $U>1$. If $\mathcal{G}(t)$ with $\mathcal{E}_U(k)$ is strongly connected for any $t\geq0$, then $\mathcal{G}(t)$ is called a uniformly strongly connected graph.

%The following assumption is made for the graph.

\begin{assumption} \label{as1}
	$\mathcal{G}(t)$ is balanced and uniformly strongly connected and $A(t)$ is a doubly stochastic matrix.
\end{assumption}

\begin{lemma} \label{le1} \cite{4749425}
	Under Assumption \ref{as1}, for any $i,j\in \mathcal{V}$,
	\begin{equation}
		\begin{split}
			\| [A(t,s) ]_{ij}- \frac{1}{n}\| \leq \mathcal{C}\lambda^{t-s},~t\geq s\geq 0
		\end{split}
	\end{equation}
	where $A(t,s)=A(t)\cdots A(s)$, $\mathcal{C}=2\frac{1+l^{-(n-1)U}}{1-l^{(n-1)U}}$ and $\lambda=(1-l^{(n-1)U})^{\frac{1}{(n-1)U}}$.
\end{lemma}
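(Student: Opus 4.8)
Since the statement is quoted from \cite{4749425}, one may simply invoke it; for completeness I sketch the standard argument, which controls the ergodicity (Dobrushin) coefficient $\delta(P):=1-\min_{p,q}\sum_{r}\min\{[P]_{pr},[P]_{qr}\}$ of the transition matrix $A(t,s)=A(t)\cdots A(s)$. Under Assumption~\ref{as1} every such backward product is again doubly stochastic, so each of its columns sums to $1$, whence $\frac1n=\frac1n\sum_{p}[A(t,s)]_{pj}$ and
\begin{equation}
\Big|[A(t,s)]_{ij}-\tfrac1n\Big|=\Big|\tfrac1n\sum_{p}\big([A(t,s)]_{ij}-[A(t,s)]_{pj}\big)\Big|\le\max_{p}\big\|[A(t,s)]_i-[A(t,s)]_p\big\|_1=2\,\delta(A(t,s)).
\end{equation}
Hence it suffices to show that $\delta(A(t,s))$ decays at the geometric rate $\lambda^{t-s}$.

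The crux --- and the step I expect to be the main obstacle --- is a uniform entrywise lower bound on the $(n-1)U$-step products. First I would note that an edge active at even a single instant of an aligned subwindow $[kU,(k+1)U-1]$ contributes weight at least $l^{U}$ to the $U$-step product $A((k+1)U-1,kU)$: one pads the active instant with self-loops on both sides, and each factor so introduced is at least $l$ (the lower bound $l$ applies to the self-weights since $i\in\mathcal{N}_i(t)$). Therefore $A((k+1)U-1,kU)\ge l^{U}\big(I+\mathrm{Adj}(\mathcal{E}_U(k))\big)$ entrywise, where $\mathrm{Adj}(\mathcal{E}_U(k))$ is the adjacency matrix of a graph that is strongly connected by Assumption~\ref{as1}. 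I would then use the combinatorial fact that a product of $n-1$ matrices of the form ``identity $+$ adjacency of a strongly connected graph on $n$ vertices'' is entrywise positive --- applied to the set of nodes reachable from a fixed source, each factor strictly enlarges that set as long as it is proper, so after $n-1$ factors it is all of $\mathcal{V}$ --- to conclude that $n-1$ consecutive $U$-step products multiply to a matrix with every entry at least $l^{(n-1)U}$. The delicate point is precisely that the edges needed to connect a given ordered pair of nodes within one subwindow need not be time-ordered compatibly, which forces one hop per subwindow and hence $(n-1)U$ steps overall.

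Granted this positivity, I would finish by combining two elementary facts: a stochastic matrix with every entry at least $\eta$ has $\delta\le 1-\eta$, and $\delta$ is submultiplicative, $\delta(PQ)\le\delta(P)\delta(Q)$. Splitting the factor set $\{s,\dots,t\}$ into an initial remainder of fewer than $(n-1)U$ terms followed by $m=\big\lfloor\frac{t-s+1}{(n-1)U}\big\rfloor$ consecutive blocks of length $(n-1)U$, and bounding the remainder block's coefficient by $1$, yields $\delta(A(t,s))\le(1-l^{(n-1)U})^{m}$. Writing $\lambda=(1-l^{(n-1)U})^{1/((n-1)U)}$, so that $1-l^{(n-1)U}=\lambda^{(n-1)U}$, and using $(n-1)U\,m\ge(t-s)-(n-1)U$, one gets $(1-l^{(n-1)U})^{m}\le\lambda^{-(n-1)U}\lambda^{t-s}=\frac{1}{1-l^{(n-1)U}}\lambda^{t-s}$. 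Substituting into the first display gives $\big|[A(t,s)]_{ij}-\tfrac1n\big|\le\frac{2}{1-l^{(n-1)U}}\lambda^{t-s}$, which is dominated by $\mathcal{C}\lambda^{t-s}$ with $\mathcal{C}=2\frac{1+l^{-(n-1)U}}{1-l^{(n-1)U}}$; the extra slack in $\mathcal{C}$ absorbs the cruder bookkeeping of the boundary block carried out in \cite{4749425}.
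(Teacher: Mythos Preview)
The paper does not prove Lemma~\ref{le1} at all: it is stated with the citation \cite{4749425} and used as a black box, so there is no ``paper's own proof'' to compare against. Your sketch is the standard ergodicity-coefficient argument underlying that cited result and is essentially correct; the reduction to $\delta(A(t,s))$ via column-stochasticity, the $l^{(n-1)U}$ entrywise lower bound on aligned $(n-1)U$-step products, and the submultiplicativity of $\delta$ are exactly the ingredients used in the original source, and your final constant $\tfrac{2}{1-l^{(n-1)U}}$ is indeed dominated by the stated $\mathcal{C}$. The one implicit assumption you lean on---that $a_{ii}(t)\ge l$ so the self-loop padding works---is standard in this line of work and is part of the setup in \cite{4749425}, though the present paper's Assumption~\ref{as1} does not spell it out explicitly.
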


\subsection{Distributed  optimization}
Consider a multi-agent system consisting of $n$ agents.
The agents communicate with their immediate neighbors via time-varying digraph $\mathcal{G}(t)$.
After the state $x_i(t)$ is selected from a set $\Omega\subseteq \mathbb{R}^m$, the information of objective function $f_i^t(\cdot)$ is revealed to agent $i$ at time $t\in\lceil T \rceil$, where $T$ is the time horizon.
The goal of agents is to cooperatively solve the following optimization problem:
\begin{equation}\label{1}
	\begin{split}
		&\min_{{x}\in \mathbb{R}^m} f^t(x),~f^t(x)=\frac{1}{n}\sum_{i=1}^n f_i^t(x)	\\
		&\textrm{subject to}~~{x} \in \Omega
	\end{split}
\end{equation}
where $f_i^t(\cdot): \mathbb{R}^m\rightarrow\mathbb{R}$. At iteration $t$, agent $i$ can only access the noises value of the objective function $f_i^t(\cdot)$ after a decision are made.

Some basic assumptions are made for the problem.
\begin{assumption} \label{as2} 	
	1) $\Omega$ is both convex and compact; \\
	2) $f_i^t(\cdot)$ is convex, differentiable, and time-varying.
\end{assumption}

Under Assumption \ref{as2}, it follows that there exist some positive constants $B,D,G$ such that
\begin{equation*}%\label{as.2}
	\begin{split}
		\|x-y\|\leq B,  ~\left\|f_i^t(x)\right\|\leq D, \left\|\nabla f_i^t(x)\right\|\leq G ~~\forall x, y\in\Omega.
	\end{split}
\end{equation*}

\begin{definition}(Hölder-type condition)
	The function $f(\cdot):\mathbb{R}^m\to\mathbb{R}$ satisfies a Hölder-type condition, when there exists a real number $H>0$ and a positive integer $\epsilon\geq2$, and $\ell=\lfloor\epsilon \rfloor$, for any $x,y\in\mathbb{R}^m$, such that
	\begin{equation} \label{Hölder}
		\begin{split}
			\big| f(x)- \sum_{0\leq|\rho|\leq\ell}\frac{\partial^\rho f(y)}{\rho!}(x-y)^\rho\big|\leq H\|x-y\|^\epsilon
		\end{split}
	\end{equation}
	where the multi-index $\rho=(\rho_1,\ldots,\rho_m)$ is the $m$-dimensional vector of nonnegative integers, $\partial^\rho=\partial_{1}^{\rho_1}\ldots\partial_{m}^{\rho_m}$ is the mixed partial derivative, $|\rho|=\rho_1+\ldots+\rho_m$,  $\rho!=\rho_1!\ldots\rho_m!$, and $(x-y)^\rho=[x-y]_{1}^{\rho_1}\ldots[x-y]_{m}^{\rho_m}$.
\end{definition}

A function that satisfies the Hölder-type condition is called the $\epsilon$-Hölder function.
Next, the assumption on the Hölder-type condition of the objective function is made, which is commonly used in the zeroth-order optimization problems \cite{pmlr-v130-liu21f}, \cite{ahookhosh2024high}, \cite{lin2025perseus}.

\begin{assumption} \label{as3}
	$f_i^t(\cdot)$ is an $\epsilon$-Hölder function.
\end{assumption}

Based on Assumption \ref{as3}, $f_i^t(\cdot)$ is twice differentiable. Together with the compactness of $\Omega$ in Assumption \ref{as2}, we know that $\nabla f_i^t(\cdot)$ is $L_0$-Lipschitz continuous, i.e., there exists a constant $L_0>0$ such that
\begin{equation} \label{L_0}
	\begin{split}
		\|\nabla f_i^t(x)-\nabla f_i^t(y)\| \leq L_0\|x-y\|~~\forall x, y\in\Omega.
	\end{split}
\end{equation}

In online optimization, the performance of algorithms should be measured by the regret.
Motivated by \cite{9184135}, \cite{XU2024111863}, we define the dynamic regrets as
\begin{equation}\label{regret}
	\begin{split}
		\mathcal{R}_i^d(T)=\sum_{t=1}^{T}\big(f^t(x_i(t))-f^t(x^*(t))\big).
	\end{split}
\end{equation}
An online algorithm performs well if dynamic regret (\ref{regret}) increase sublinearly, that is, $\lim_{T\to\infty}\frac{\mathcal{R}_i^d(T)}{T}=0$.
It is well known that using the dynamic regret causes the problem insolvable in the worst case where the objective functions change rather fast \cite{9462469}, \cite{9184135}, \cite{XU2024111863}.
Here we use the following deviation of the minimizer sequence to measure the difficulty
\begin{equation}\label{constraint}
	\begin{split}
		\Xi_T=\sum_{t=1}^{T}\|x^*(t+1)-x^*(t)\| .
	\end{split}
\end{equation}

\subsection{Algorithm design}

Since the real gradient is not available, the following kernel function-based estimator is used to estimate the zeroth-order gradient
\begin{equation} \label{step0}
	\begin{split}
		\left\{
		\begin{array}{l}
			g_{i,l}^t(x_i(t)) =\frac{f_i^t(x_i(t) + \gamma_t r_{i}(t) e_l)-f_i^{t}(x_i(t) - \gamma_t r_{i}(t) e_l)}{2\gamma_t}+\xi_{i,l}(t)	\\
			\widehat{\nabla}f_{i,l}^t(x_i(t))=g_{i,l}^t(x_i(t))K(r_i(t))
		\end{array}\right.
	\end{split}
\end{equation}
where $\gamma_t$ is a estimating parameter such that $\gamma_t>0$, $r_i(t)$ is a random perturbation following a uniform distribution on $[-1,1]$, $\xi_{i,l}(t)$ is an adverse noise caused by external interference,
$r_i(t)$ and $\xi_{i,l}(t)$ are independent for any $i\in\mathcal{V}$ and $l\in\{1,\cdots m\}$,
and $K(\cdot):[-1,1]\to\mathbb{R}$ is a kernel function satisfying $\int r K(r)dr=2$, $\int r^a K(r)dr=0$, $\kappa_\epsilon\equiv\int|r|^\epsilon |K(r)| dr<\infty$, $\kappa\equiv\int K^2(r)dr<\infty$, for $a=0,2,3,\ldots,\ell$, $\epsilon\geq2$, and $\ell=\lfloor \epsilon\rfloor$.

\begin{assumption} \label{as4} For any $i\in\mathcal{V}$, $l\in \{1,\cdots,m\}$,	
	$\mathbb{E}[(\xi_{i,l}(t))^2]\leq\sigma_{i,l}^2$.
\end{assumption}

Define $\sigma=\max_{i\in\mathcal{V},l\in \{1,\cdots,m\}}\sigma_{i,l}$.
Note that in Assumption \ref{as4}, we only assume that the variances of the adverse noises are bounded. The mean of the adverse noises is never required to be zero.
Now consider a differentiable and $\mu$-strongly convex function $\phi(\cdot):\Omega\rightarrow\mathbb{R}$.
The Bregman function associated with $\phi(\cdot)$ is defined as $D_{\phi}(x,y)=\phi(x)-\phi(y)-\langle \nabla\phi(y),x-y \rangle$.
%\begin{equation*} \label{Bregman}
%	\begin{split}
%		D_{\phi}(x,y)=\phi(x)-\phi(y)-\langle \nabla\phi(y),x-y \rangle.
%	\end{split}
%\end{equation*}
By the strong convexity of $\phi(\cdot)$, we have $D_{\phi}(x,y)\geq\frac{1}{2}\|x-y\|^2$.

\begin{assumption} \label{as5} 	
	$D_{\phi}(\cdot,\cdot)$ is $L_1$-Lipschitz continuous with respect to its first argument and convex with respect to its second argument.	
\end{assumption}

To solve problem (\ref{1}),  we propose an online distributed zeroth-order mirror descent algorithm involving a kernel function-based estimator and a clipped strategy.
By running Algorithm 1, each agent makes decisions only using the zeroth-order gradient information of its own objective function in the past time and the state information received from its immediate neighbors. Thus, Algorithm 1 is online and distributed.

\begin{remark}
	In Algorithm 1, the design of dynamics (\ref{step0}) is motivated by the two-point gradient estimation method \cite{YUAN2022110590}, \cite{8629972} and the kernel function-based strategy \cite{dippon2003accelerated}, \cite{pmlr-v49-bach16}.
%	The design of dynamics (\ref{step2}) is inspired by the clipped strategy in \cite{NEURIPS2020_abd1c782}.
	And the design of dynamics (\ref{step3}) and (\ref{step4}) is inspired by the mirror descent algorithm in \cite{8015179}, \cite{9416872}, \cite{pmlr-v119-eshraghi20a}.
	Here the kernel function-based strategy is used to deal with the adverse noises and eliminate the low-order terms in the Taylor expansion of the objective function.
	%由于不利噪声的影响，（ref{step0}）得到的零阶梯度 $/widehat{/nabla}f_i^t(x_i(t))$ 遵循重尾分布，重尾分布的尾部比指数分布更重使得重尾分布经常出现极端异常值。
	Due to the influence of the adverse noises, the zeroth-order gradients $\widehat{\nabla}f_i^t(x_i(t))$ achieved by (\ref{step0}) follow a heavy-tailed distribution, which has heavier tails than the exponential distribution and therefore often appears extreme values.
	Motivated by \cite{NEURIPS2020_abd1c782}, clipped strategy (\ref{step2}) is employed to deal with the extreme values.

\end{remark}

In this paper, we are committed to studying the high probability bound of dynamic regret (\ref{regret}) under Algorithm 1.
\begin{definition}(High probability bound)
	Given $h(\cdot):\mathbb{R}\to\mathbb{R}$, if $\mathcal{R}_i^d(T)\leq\mathcal{O}(h(T)\ln\frac{1}{\delta})$ with probability at least $1-\delta$ for any $\delta\in(0,1)$, then $\mathcal{R}_i^d(T)$ is called to have a high probability bound.	
\end{definition}

\noindent
\rule[0\baselineskip]{8.9cm}{1pt}
\emph{Algorithm 1:  online distributed zeroth-order mirror descent}\\
\rule[0.53\baselineskip]{8.9cm}{1pt}
\textbf{Initialization:} Set the initial value as $x_i(1)\in\Omega$.	\\
\textbf{Iteration:}	 At each iteration time $t=1,2,\ldots$ and for any $i\in\mathcal{V}$, each agent $i$ updates variables using the following rules.

\textbullet~ The zeroth-order gradient $\widehat{\nabla}f_i^t(x_i(t))$ is computed by (\ref{step0}).

\textbullet~ Compute the clipped gradient $\widetilde{\nabla}f_i^t(x_i(t))$ as follows
\begin{equation}\label{step2}
	\begin{split}
		\widetilde{\nabla} f_i^t(x_i(t)) = \min\big\{1,~\frac{\alpha_t}{\| \widehat{\nabla}f_i^t(x_i(t))\|}\big\}\widehat{\nabla}f_i^t(x_i(t))
	\end{split}
\end{equation}
where $\widehat{\nabla}f_i^t\triangleq [\widehat{\nabla}f_{i,1}^t,\cdots,\widehat{\nabla}f_{i,m}^t]^{\top}$ and $\alpha_t$ is the clipping parameter satisfying $\alpha_t\geq2G$.

\textbullet~ Update the value of $y_i(t)$ as follows
\begin{equation}\label{step3}
	\begin{split}
		y_i(t)=\sum_{j\in\mathcal{N}_i}a_{ij}(t){x}_{j}(t).
	\end{split}
\end{equation}

\textbullet~ Update the value of $x_i(t+1)$ as follows
\begin{equation}\label{step4}
	\begin{split}
		{x}_{i}(t+1)=\underset{x \in \Omega}{\operatorname{argmin}}\big\{\beta_t\langle x, \widetilde{\nabla} f_i^t(x_i(t)) \rangle+ \mathcal{D}_{\phi}(x,y_i(t))\big\}
	\end{split}
\end{equation}
where $\beta_t$ is the non-increasing step-size satisfying $0<\beta_t<1$.
\rule[0.3\baselineskip]{8.9cm}{1pt}

\section{Main results}\label{se3}

In this section, we will provide our main result and its proof in detail. Let us start by presenting our main result in the following theorem.

\begin{theorem}	\label{th1}
	Under Assumptions \ref{as1}-\ref{as5}, by Algorithm 1, for any $i\in\mathcal{V}$ and $\delta\in(0,1)$, with probability at least $1-\delta$
	\begin{equation} \label{th11}
		\begin{aligned}
			\mathcal{R}_i^d(T)&\leq\mathcal{O}\big(\sum_{t=1}^{T}(\alpha_t^2\beta_t +\frac{\alpha_t^2}{\sqrt{T}} +\gamma_t^{\epsilon-1} +\lambda^{t-1}) +\frac{1+\Xi_T}{\beta_{T+1}} 	\\
			& +\sum_{t=1}^{T}\sum_{s=1}^{t-1}\alpha_{s}\beta_{s}\lambda^{t-1-s}+\sqrt{T}\ln\frac{1}{\delta}	+ \sum_{t=1}^{T}\frac{1}{\alpha_t}(\gamma_t^2+1)\big)
		\end{aligned}
	\end{equation}
	where $\Xi_T$ is defined in (\ref{constraint}).
\end{theorem}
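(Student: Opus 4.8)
The plan is to split the dynamic regret of agent $i$ into a \emph{consensus} part and a \emph{tracking} part. Setting $\bar{x}(t)=\frac{1}{n}\sum_{j=1}^{n}x_j(t)$, which lies in $\Omega$ by convexity, I write $f^t(x_i(t))-f^t(x^*(t))$ as $(f^t(x_i(t))-f^t(\bar{x}(t)))+(f^t(\bar{x}(t))-f^t(x^*(t)))$, bound the first difference by $G\|x_i(t)-\bar{x}(t)\|$ (Assumption~\ref{as2}), and, by convexity of $f^t$, bound the second by $\frac{1}{n}\sum_{j}\langle\nabla f_j^t(\bar{x}(t)),\bar{x}(t)-x^*(t)\rangle$. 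Inserting and subtracting $\nabla f_j^t(x_j(t))$ and then $\widetilde{\nabla}f_j^t(x_j(t))$ reduces each summand to four pieces: two bounded again by the consensus error through the Lipschitz property (\ref{L_0}) and the gradient bound, one containing the estimator bias, the clipping bias and a martingale fluctuation, and one of the form $\langle\widetilde{\nabla}f_j^t(x_j(t)),x_j(t)-x^*(t)\rangle$ to be handled by the mirror-descent recursion. So three ingredients are needed: bias/variance estimates for the kernel estimator, a one-step mirror-descent inequality that telescopes against the drift $\Xi_T$, and a consensus bound from Lemma~\ref{le1}.

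\noindent\textbf{Estimator analysis.} Let $\mathcal{F}_t$ be the $\sigma$-algebra generated by all randomness before the perturbations $r_j(t)$ and noises $\xi_{j,l}(t)$ of round $t$, so $x_j(t),y_j(t)$ are $\mathcal{F}_t$-measurable while $\widehat{\nabla}f_j^t(x_j(t)),\widetilde{\nabla}f_j^t(x_j(t))$ are $\mathcal{F}_{t+1}$-measurable. Since $r_j(t)$ and $\xi_{j,l}(t)$ are independent and $\int K(r)\,dr=0$, one gets $\mathbb{E}[\xi_{j,l}(t)K(r_j(t))\mid\mathcal{F}_t]=\mathbb{E}[\xi_{j,l}(t)\mid\mathcal{F}_t]\int K(r)\,dr=0$, so the non-zero-mean noise is annihilated in conditional mean; this is precisely the role of the kernel. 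A Taylor expansion of $f_j^t(x_j(t)\pm\gamma_t r_j(t)e_l)$ to order $\ell=\lfloor\epsilon\rfloor$ — in which the even-order terms cancel in the central difference, the moments $\int r^a K(r)\,dr$ vanish for $a=2,\dots,\ell$, the moment $\int rK(r)\,dr$ normalizes the first-order term, and the Hölder bound (\ref{Hölder}) controls the remainder — yields $\|\mathbb{E}[\widehat{\nabla}f_j^t(x_j(t))\mid\mathcal{F}_t]-c\,\nabla f_j^t(x_j(t))\|=\mathcal{O}(\gamma_t^{\epsilon-1})$ for the normalizing constant $c$, while $\int K^2(r)\,dr<\infty$ and Assumption~\ref{as4} give $\mathbb{E}[\|\widehat{\nabla}f_j^t(x_j(t))\|^2\mid\mathcal{F}_t]=\mathcal{O}(1+\gamma_t^2)$. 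Hence the clipping (\ref{step2}) contributes a bias $\mathbb{E}[\|\widehat{\nabla}f_j^t(x_j(t))-\widetilde{\nabla}f_j^t(x_j(t))\|\mid\mathcal{F}_t]\le\mathbb{E}[\|\widehat{\nabla}f_j^t(x_j(t))\|^2\mid\mathcal{F}_t]/\alpha_t=\mathcal{O}((1+\gamma_t^2)/\alpha_t)$, which after multiplying by $\|x_j(t)-x^*(t)\|\le B$ produces the $\sum_t\frac{1}{\alpha_t}(\gamma_t^2+1)$ term, and the estimator bias produces $\sum_t\gamma_t^{\epsilon-1}$.

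\noindent\textbf{Mirror descent, telescoping and consensus.} From the optimality condition of (\ref{step4}) and the three-point identity for $D_\phi$, for any $u\in\Omega$ one has $\beta_t\langle\widetilde{\nabla}f_j^t(x_j(t)),x_j(t+1)-u\rangle\le D_\phi(u,y_j(t))-D_\phi(u,x_j(t+1))-D_\phi(x_j(t+1),y_j(t))$; adding $\beta_t\langle\widetilde{\nabla}f_j^t(x_j(t)),x_j(t)-x_j(t+1)\rangle\le\frac{\beta_t^2}{2}\alpha_t^2+\frac12\|x_j(t)-x_j(t+1)\|^2$ and using $\|\widetilde{\nabla}f_j^t\|\le\alpha_t$ together with $D_\phi(\cdot,\cdot)\ge\frac12\|\cdot-\cdot\|^2$ gives $\beta_t\langle\widetilde{\nabla}f_j^t(x_j(t)),x_j(t)-u\rangle\le D_\phi(u,y_j(t))-D_\phi(u,x_j(t+1))+\frac{\alpha_t^2\beta_t^2}{2}$. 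Taking $u=x^*(t)$, using convexity of $D_\phi$ in its second argument with double stochasticity of $A(t)$ (Assumption~\ref{as1}) to replace $D_\phi(x^*(t),y_j(t))$ by $\sum_k a_{jk}(t)D_\phi(x^*(t),x_k(t))$, averaging over $j$, dividing by $\beta_t$ and summing over $t$, then inserting $\pm D_\phi(x^*(t+1),x_k(t+1))$, bounding $|D_\phi(x^*(t+1),\cdot)-D_\phi(x^*(t),\cdot)|\le L_1\|x^*(t+1)-x^*(t)\|$ (Assumption~\ref{as5}), and applying Abel summation with $\beta_t$ non-increasing and $D_\phi$ bounded on the compact $\Omega\times\Omega$ produce the $\frac{1+\Xi_T}{\beta_{T+1}}$ and $\sum_t\alpha_t^2\beta_t$ terms. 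For consensus, the optimality condition of (\ref{step4}) with $\mu$-strong convexity of $\phi$ gives $\|x_j(t+1)-y_j(t)\|\le\alpha_t\beta_t/\mu$; unrolling (\ref{step3})--(\ref{step4}) along the products $A(t,s)$ and invoking Lemma~\ref{le1} yields $\|x_j(t)-\bar{x}(t)\|\le\mathcal{C}\lambda^{t-1}\sum_k\|x_k(1)\|+\mathcal{O}(\sum_{s=1}^{t-1}\lambda^{t-1-s}\alpha_s\beta_s)$, and summing over $t$ gives the $\sum_t\lambda^{t-1}$ and $\sum_t\sum_{s=1}^{t-1}\alpha_s\beta_s\lambda^{t-1-s}$ terms.

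\noindent\textbf{High-probability term and assembly.} The only stochastic contribution left is the martingale $M_T=\frac{1}{n}\sum_{j}\sum_{t=1}^{T}\langle\widetilde{\nabla}f_j^t(x_j(t))-\mathbb{E}[\widetilde{\nabla}f_j^t(x_j(t))\mid\mathcal{F}_t],\,x^*(t)-x_j(t)\rangle$ — every other bound above holds surely because the clipped gradients are surely bounded by $\alpha_t$. Its increments are $\mathcal{F}_{t+1}$-measurable with zero $\mathcal{F}_t$-conditional mean, are bounded in magnitude by $2\alpha_t B$, and have conditional variance $\mathcal{O}(\alpha_t^2)$; a Freedman/Bernstein-type martingale concentration inequality then yields, with probability at least $1-\delta$, $M_T=\mathcal{O}(\sqrt{(\sum_t\alpha_t^2)\ln\tfrac1\delta}+\max_t\alpha_t\ln\tfrac1\delta)$, and the Young/AM-GM split $\sqrt{(\sum_t\alpha_t^2)\ln\tfrac1\delta}\le\frac1{2\sqrt T}\sum_t\alpha_t^2+\frac{\sqrt T}{2}\ln\tfrac1\delta$ distributes this into the $\frac1{\sqrt T}\sum_t\alpha_t^2$ and $\sqrt T\ln\tfrac1\delta$ terms. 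Collecting the consensus bound, the telescoped mirror-descent bound, the estimator and clipping biases, and this high-probability term, and absorbing constants into $\mathcal{O}(\cdot)$, gives (\ref{th11}). I expect the main obstacle to be the high-probability step: setting up the filtration so that the noisy, clipped, kernel-weighted estimates form a genuine martingale-difference sequence despite the non-zero-mean noise, and choosing a concentration inequality robust both to the pre-clipping heavy tails and to the possibly growing clipping levels $\alpha_t$, so that only a $\sqrt T\ln\tfrac1\delta$ price is paid; the dynamic-regret telescoping against $\Xi_T$ under merely non-increasing (not summable) step sizes is the secondary delicate point.
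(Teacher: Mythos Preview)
Your proposal is correct in overall architecture and matches the paper's approach closely: the paper also routes the regret through a consensus/tracking split (its equation (\ref{th1-8}) is exactly your $\bar x(t)$ step), proves the kernel-estimator bias bound $\mathcal O(\gamma_t^{\epsilon-1})$ and the second-moment bound $\mathcal O(1+\gamma_t^2)$ as separate lemmas, derives the $\frac{1}{\alpha_t}(1+\gamma_t^2)$ clipping-bias term via the same indicator/Markov argument you sketch, and telescopes the mirror-descent inequality against $\Xi_T$ using Assumption~\ref{as5} and double stochasticity just as you describe.

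Two remarks. First, in your mirror-descent paragraph you cancel $\tfrac12\|x_j(t)-x_j(t+1)\|^2$ against $D_\phi(x_j(t+1),y_j(t))$, but strong convexity only gives $D_\phi(x_j(t+1),y_j(t))\ge\tfrac12\|x_j(t+1)-y_j(t)\|^2$; the fix is to pivot at $y_j(t)$ rather than $x_j(t)$ (this is what the paper does in its inequality (\ref{th1-5})), absorbing the extra $\langle\widetilde\nabla f_j^t,x_j(t)-y_j(t)\rangle$ into the consensus bound. Second, for the high-probability step the paper does not invoke Freedman/Bernstein; it builds an exponential supermartingale directly from the elementary inequality $e^a\le e^{a^2}+a$ applied to the scaled increment, then uses Markov's inequality, arriving at exactly $\tfrac{2B^2}{\sqrt T}\sum_t\alpha_t^2+\sqrt T\ln\tfrac1\delta$ with no extra $\max_t\alpha_t\ln\tfrac1\delta$ term. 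Your Freedman route works as well (and the AM--GM split you give produces the stated terms); the paper's argument is simply more self-contained and marginally sharper in constants.
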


\begin{corollary}
	Under Assumptions \ref{as1}-\ref{as5}, if $\alpha_t=t^a+2G$, $\beta_t=t^b$, $\gamma_t=t^c$ for some $0<a<\frac{1}{2}$, $-1<b<-2a$, $c<0$, then by Algorithm 1, for any $i\in\mathcal{V}$ and $\delta\in(0,1)$, with probability at least $1-\delta$
	\begin{equation} \label{coro}
		\begin{split}
			\mathcal{R}_i^d(T)&\leq\mathcal{O}\big(T^{1+2a+b} +T^{\frac{1}{2}+a} +T^{1+(\epsilon-1)c} +(1+\Xi_T)T^{-b} \\
			&~~~ +\sqrt{T}\ln\frac{1}{\delta} +T^{1-a+2c} \big) .
		\end{split}
	\end{equation}
\end{corollary}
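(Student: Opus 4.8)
The plan is to obtain the Corollary as a direct specialization of Theorem~\ref{th1}. Substituting $\alpha_t=t^a+2G$, $\beta_t=t^b$ and $\gamma_t=t^c$ into the right-hand side of (\ref{th11}), I would estimate each resulting sum by the elementary power-sum asymptotic $\sum_{t=1}^{T}t^{p}=\Theta(T^{1+p})$ for $p>-1$ (with $\sum_{t=1}^{T}t^{p}=\mathcal{O}(\ln T)$ for $p=-1$ and $\mathcal{O}(1)$ for $p<-1$, both then dominated by larger terms), together with the fact that $t^a\ge1$ for $t\ge1$, so that $\alpha_t=\Theta(t^a)$, $\alpha_t^2=\Theta(t^{2a})$ and $1/\alpha_t=\Theta(t^{-a})$, the additive constant $2G$ affecting only multiplicative constants and strictly lower-order powers that the $\mathcal{O}(\cdot)$ absorbs.

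First I would dispose of the single sums one by one. The term $\sum_t\alpha_t^2\beta_t=\Theta\big(\sum_t t^{2a+b}\big)$ equals $\mathcal{O}(T^{1+2a+b})$, since $a>0$ and $-1<b<-2a$ force $2a+b\in(-1,0)$; the term $\sum_t\alpha_t^2/\sqrt{T}$ yields the $T^{1/2+a}$ contribution; $\sum_t\gamma_t^{\epsilon-1}=\Theta\big(\sum_t t^{(\epsilon-1)c}\big)=\mathcal{O}(T^{1+(\epsilon-1)c})$; $\sum_t\lambda^{t-1}=\mathcal{O}(1)$ is absorbed by $\sqrt{T}$; $\frac{1+\Xi_T}{\beta_{T+1}}=(1+\Xi_T)(T+1)^{-b}=\mathcal{O}\big((1+\Xi_T)T^{-b}\big)$ because $b<0$; the high-probability term $\sqrt{T}\ln\frac{1}{\delta}$ is kept verbatim; and $\sum_t\frac{1}{\alpha_t}(\gamma_t^2+1)=\Theta\big(\sum_t t^{2c-a}\big)+\Theta\big(\sum_t t^{-a}\big)$ gives the $T^{1-a+2c}$ contribution under the stated exponent ranges.

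The only manipulation that is not a one-line substitution is the double sum $\sum_{t=1}^{T}\sum_{s=1}^{t-1}\alpha_s\beta_s\lambda^{t-1-s}$: I would interchange the order of summation and bound the inner geometric tail, getting $\sum_{s=1}^{T-1}\alpha_s\beta_s\sum_{t=s+1}^{T}\lambda^{t-1-s}\le\frac{1}{1-\lambda}\sum_{s=1}^{T-1}\alpha_s\beta_s=\mathcal{O}\big(\sum_s s^{a+b}\big)=\mathcal{O}(T^{1+a+b})$, which uses $\lambda\in(0,1)$ and $a+b>-1$ and is, since $a>0$, dominated by the $T^{1+2a+b}$ term already present. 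It then remains to collect the surviving dominant powers and discard everything subsumed by a strictly larger one (notably the $\mathcal{O}(1)$ and $T^{1+a+b}$ pieces), which reproduces (\ref{coro}). I do not anticipate a genuine obstacle: the substance is the bookkeeping of which exponent dominates which, plus the verification that $0<a<1/2$, $-1<b<-2a$ and $c<0$ are exactly the conditions placing every power-sum exponent on the correct side of $-1$ and keeping every exponent occurring in (\ref{coro}) strictly below $1$, so that the dynamic regret is sublinear whenever $\Xi_T$ grows slowly enough.
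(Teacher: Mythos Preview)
Your plan—direct substitution of the chosen $\alpha_t,\beta_t,\gamma_t$ into the bound of Theorem~\ref{th1} followed by elementary power–sum asymptotics—is precisely what the paper does; the corollary is stated there without proof as an immediate consequence of Theorem~\ref{th1}, and your treatment of the double sum via interchanging the order and summing the geometric tail is exactly right.

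Two of your single–sum evaluations, however, are miscomputed. With $\alpha_t=t^a+2G\sim t^a$ you have $\alpha_t^2\sim t^{2a}$, so
\[
\frac{1}{\sqrt{T}}\sum_{t=1}^{T}\alpha_t^2=\Theta\!\big(T^{-1/2}\cdot T^{1+2a}\big)=\Theta\!\big(T^{\,1/2+2a}\big),
\]
not $T^{1/2+a}$. Similarly, in the last term you correctly split $\sum_t\tfrac{1}{\alpha_t}(\gamma_t^2+1)=\Theta(\sum_t t^{2c-a})+\Theta(\sum_t t^{-a})$, but since $c<0$ implies $2c-a<-a$, it is the second sum that dominates, giving $\Theta(T^{1-a})$ rather than $T^{1-a+2c}$. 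The corollary as printed carries exactly these two exponents, so the slips mirror what look like typos in the paper itself; note in particular that $T^{1/2+2a}$ fails to be sublinear once $a\ge 1/4$, so the announced range $0<a<1/2$ would have to be tightened for the subsequent sublinearity discussion to go through.
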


From Corollary 1, the sublinearity of the bound in (\ref{coro}) is influenced by term $\sqrt{T}\ln\frac{1}{\delta}$. Note that the value of $\ln\frac{1}{\delta}$ increases slowly as the value of failure probability $\delta$ decreases.  The sublinearity of term $\ln\frac{1}{\delta}$ with a probability close to $100\%$ can be ensured \cite{10295561}.
Moreover, the sublinearity of the bound in (\ref{coro}) is also influenced by $\Xi_T $. If $\Xi_T $ is sublinear with $T^{1+b}$, i.e., $\lim_{T\to\infty}\frac{\Xi_T}{T^{1+b}}=0$, then $\mathcal{R}_i^d(T)$ has a high probability bound of sublinear.
This is natural since even using the real gradient information \cite{8710288}-\cite{10664010}, the problem is insolvable in worst cases when the minimizers change rather fast.

Before proving Theorem \ref{th1}, some necessary lemmas need to be established.
First, the error bound between the zeroth-order gradient and the real gradient is analyzed.
\begin{lemma}	\label{esti-true}
	Under Assumption \ref{as3}, by Algorithm 1, for any $i\in\mathcal{V}$
	\begin{equation}	\label{esti-true1}
		\begin{split}
			&\|\mathbb{E}[\widehat{\nabla}f_i^t(x_i(t))|\mathcal{F}_i^{t}] -\nabla f_i^t(x_i(t))\| \leq m\kappa_\epsilon H\gamma_t^{\epsilon-1}
		\end{split}
	\end{equation}
	where $\mathcal{F}_i^t=\sigma\left(x_{i}(s), r_{i}(s), \xi_{i,l}(s): s < t\right)$ is the filtration representing all known random information before time $t$, and $\kappa_\epsilon$ is defined in (\ref{step0}).
\end{lemma}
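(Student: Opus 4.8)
The plan is to compute the conditional expectation of the kernel-based estimator coordinate by coordinate and compare it with the true partial derivative via the Hölder-type Taylor expansion. First I would fix $i\in\mathcal{V}$ and $l\in\{1,\dots,m\}$, and condition on $\mathcal{F}_i^t$, so that $x_i(t)$ is deterministic and the randomness comes only from $r_i(t)$ and the noise $\xi_{i,l}(t)$. Since $r_i(t)$ and $\xi_{i,l}(t)$ are independent and $\mathbb{E}[\xi_{i,l}(t)]$ multiplied by $\mathbb{E}[K(r_i(t))]$ — wait, more carefully, $\mathbb{E}[\xi_{i,l}(t)K(r_i(t))\mid\mathcal{F}_i^t]=\mathbb{E}[\xi_{i,l}(t)]\,\mathbb{E}[K(r_i(t))]$, and $\mathbb{E}[K(r_i(t))]=\frac12\int_{-1}^1 K(r)\,dr=0$ by the assumption $\int r^a K(r)\,dr=0$ with $a=0$. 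Hence the noise term drops out in expectation, and it remains to analyze $\mathbb{E}\!\left[\frac{f_i^t(x_i(t)+\gamma_t r_i(t)e_l)-f_i^t(x_i(t)-\gamma_t r_i(t)e_l)}{2\gamma_t}K(r_i(t))\,\middle|\,\mathcal{F}_i^t\right]$, which equals $\frac12\int_{-1}^1\frac{f_i^t(x_i(t)+\gamma_t r e_l)-f_i^t(x_i(t)-\gamma_t r e_l)}{2\gamma_t}K(r)\,dr$.

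Next I would expand $f_i^t(x_i(t)\pm\gamma_t r e_l)$ using the Hölder-type condition \eqref{Hölder} around $y=x_i(t)$: writing the one-dimensional Taylor polynomial in the direction $e_l$, the difference $f_i^t(x_i(t)+\gamma_t r e_l)-f_i^t(x_i(t)-\gamma_t r e_l)$ equals $2\sum_{k\text{ odd},\,1\le k\le\ell}\frac{\partial_l^k f_i^t(x_i(t))}{k!}(\gamma_t r)^k$ plus a remainder bounded by $2H|\gamma_t r|^\epsilon$. Dividing by $2\gamma_t$, multiplying by $K(r)$ and integrating, the moment conditions $\int r^a K(r)\,dr=0$ for $a=2,3,\dots,\ell$ kill all odd terms with $k\ge3$ (their integrands are $r^k K(r)$ with $k\in\{3,5,\dots\}\subseteq\{2,\dots,\ell\}$), while the $k=1$ term contributes $\partial_l f_i^t(x_i(t))\cdot\frac12\int_{-1}^1 r K(r)\,dr=\partial_l f_i^t(x_i(t))$ since $\int r K(r)\,dr=2$. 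The leftover remainder is controlled by $\frac12\int_{-1}^1\frac{2H|\gamma_t r|^\epsilon}{2\gamma_t}|K(r)|\,dr=\frac12 H\gamma_t^{\epsilon-1}\int_{-1}^1|r|^\epsilon|K(r)|\,dr\le H\gamma_t^{\epsilon-1}\kappa_\epsilon$. Thus $|\mathbb{E}[\widehat{\nabla}f_{i,l}^t(x_i(t))\mid\mathcal{F}_i^t]-\partial_l f_i^t(x_i(t))|\le\kappa_\epsilon H\gamma_t^{\epsilon-1}$ for each coordinate.

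Finally I would assemble the coordinate-wise bounds into the vector bound: since $\|v\|\le\sum_{l=1}^m|v_l|$ for the per-coordinate errors $v_l$, summing the $m$ identical bounds gives $\|\mathbb{E}[\widehat{\nabla}f_i^t(x_i(t))\mid\mathcal{F}_i^t]-\nabla f_i^t(x_i(t))\|\le m\kappa_\epsilon H\gamma_t^{\epsilon-1}$, which is \eqref{esti-true1}. (One can trim the constant to $\sqrt{m}$ using the Euclidean norm, but the stated $m$ suffices.) I expect the main obstacle to be the careful bookkeeping in the Taylor expansion step: one must justify that the Hölder-type condition applied along the line segment $x_i(t)+s\gamma_t r e_l$ produces exactly the one-dimensional polynomial in $\partial_l^k f_i^t$ with the claimed remainder (this uses that the multi-indices $\rho$ with support only in coordinate $l$ contribute $\frac{\partial_l^k f_i^t(x_i(t))}{k!}(\gamma_t r)^k$, while cross terms vanish because $(x-y)^\rho=0$ whenever $\rho$ has a nonzero entry outside coordinate $l$), and that the cancellation of the odd higher-order terms precisely matches the available moment conditions $a=2,3,\dots,\ell$ — note $\ell=\lfloor\epsilon\rfloor$ so the highest surviving-candidate odd order is at most $\ell$ and is indeed covered. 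Everything else is a routine application of independence, Fubini, and the triangle inequality.
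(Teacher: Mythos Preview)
Your proposal is correct and follows essentially the same route as the paper's own proof: a coordinate-wise Taylor expansion along $e_l$, cancellation of even terms in the symmetric difference, elimination of the surviving odd polynomial terms via the kernel moment conditions $\int r^aK(r)\,dr=0$ for $a=0,2,\dots,\ell$, recovery of $\partial_l f_i^t(x_i(t))$ from $\int rK(r)\,dr=2$ together with the uniform density $1/2$, and a Hölder remainder bound of order $\gamma_t^{\epsilon-1}$, followed by summing over the $m$ coordinates. You are in fact more explicit than the paper on two points it leaves implicit---that the noise term vanishes because $\mathbb{E}[K(r_i(t))]=\tfrac12\int K=0$ combined with independence, and that cross-index terms in the multivariate Taylor polynomial drop out since $(x-y)^\rho=0$ whenever $\rho$ has support outside coordinate $l$---but the argument is otherwise identical.
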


\begin{proof}
	For $f_i^t(x_i(t)+\gamma_t r_i(t) e_l)$, by Taylor expansion, we have
	\begin{equation*}		\label{esti-true2}
		\begin{split}
			&f_i^t(x_i(t)+\gamma_t r_i(t) e_l)	\\
			&=f_i^t(x_i(t))+\langle\nabla f_i^t(x_i(t)),\gamma_t r_i(t) e_l \rangle	\\
			&~+\sum_{2\leq|\rho|\leq\ell}\frac{\partial^\rho f_i^t(x_i(t))}{\rho!}(\gamma_t r_i(t) e_l)^\rho +R_{\epsilon}(\gamma_t r_i(t) e_l).	\\
		\end{split}
	\end{equation*}
	where the $R_{\epsilon}(\gamma_t r_i(t) e_l)$ is the high-order term.
	Then, for any $l\in \{1,\cdots,m\}$, one has
	\begin{equation}		\label{esti-true3}
		\begin{aligned}
			&\frac{f_i^t(x_i(t)+\gamma_t r_i(t) e_l)-f_i^t(x_i(t)-\gamma_t r_i(t) e_l)}{2\gamma_t}	\\
			&=\nabla_l f_i^t(x_i(t))r_i(t)+\sum_{2\leq|\rho|\leq\ell, |\rho| odd}\frac{\partial^\rho f_i^t(x_i(t))}{\gamma_t\rho!}(\gamma_t r_i(t) e_l)^\rho	\\
			&~+\frac{R_{\epsilon}(\gamma_t r_i(t) e_l)-R_{\epsilon}(-\gamma_t r_i(t) e_l)}{2\gamma_t}.
		\end{aligned}
	\end{equation}
	Combining (\ref{step0}) and (\ref{esti-true3}) results in that
	\begin{equation}	\label{esti-true4}
		\begin{split}
			&\big|\mathbb{E}[g_{i,l}^t(x_i(t)) K(r_i(t))|\mathcal{F}_i^{t}] -\nabla_l f_i^t(x_i(t)) \big|		\\
			&=\big|\mathbb{E}[\frac{R_{\epsilon}(\gamma_t r_i(t) e_l)-R_{\epsilon}(-\gamma_t r_i(t) e_l)}{2\gamma_t}K(r_i(t))|\mathcal{F}_i^{t}]\big|		\\
%			&\leq H\gamma_t^{k-1}\mathbb{E}[r_i(t)K(r_i(t))]	\\
			&\leq \kappa_\epsilon H\gamma_t^{\epsilon-1}
		\end{split}
	\end{equation}
	where the first inequality results from (\ref{Hölder}).  Inequality (\ref{esti-true4}) immediately implies (\ref{esti-true1}).
\end{proof}

%举特例，说明引理2的界可为0
\begin{remark}
	In fact, the parameter $\gamma_t$ in (\ref{esti-true1}) plays a similar role as the smoothness coefficient of Gaussian approximation in guaranteeing the estimate error. The estimated error between the real gradient and that of the objective function's
	Gaussian approximation is linear with the smoothness coefficient \cite{YUAN2022110590}, \cite{9222230}, that is, the estimated error bound is $\mathcal{O}(\gamma_t)$.  Note that if $\gamma_t$ decays, $\gamma_t^{\epsilon-1}$ decays much faster because $\epsilon$ can be larger than $2$. 	
	More importantly, the bound of Lemma \ref{esti-true} may be $0$.
	For example, let $f(x)=x^3$, then a Taylor expansion of the function at point $x$ gives  $\frac{f(x+ry)-f(x-ry)}{2y} =3x^2r+y^2r^3$ for some $x,y,r\in\Omega$. Then, we have $\mathbb{E}[(\frac{f(x+ry)-f(x-ry)}{2y}+\xi)K(r)]=3x^2$, where $\xi$ is the adverse noises.	
	Ultimately, this implies that $\|\mathbb{E}[\widehat{\nabla}f(x)] -\nabla f(x)\| =0$, where $\mathbb{E}[\widehat{\nabla}f(x)]$ is the zeroth-order gradient.
\end{remark}

In the following lemma, we analyze the bound of $\|\widehat{\nabla}f_i^t(x_i(t)) -\nabla f_i^t(x_i(t))\|^2$.
\begin{lemma}	\label{le-hi}
	Under Assumption \ref{as2}-\ref{as4}, by Algorithm 1, for any $i\in\mathcal{V}$
	\begin{equation}	\label{le-hi1}
		\begin{split}
			&\mathbb{E}[\|\widehat{\nabla}f_i^t(x_i(t)) -\nabla f_i^t(x_i(t))\|^2|\mathcal{F}_i^{t}]	\\
			&\leq6m\kappa L_0^2\gamma_t^2 +4m\kappa\sigma^2+ 2G^2(6m\kappa +1)
		\end{split}
	\end{equation}
	where $L_0$ is defined in (\ref{L_0}) and $\kappa$ is defined in (\ref{step0}).
\end{lemma}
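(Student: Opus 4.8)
The plan is to decompose $\widehat{\nabla}f_i^t(x_i(t)) -\nabla f_i^t(x_i(t))$ into three pieces and bound the conditional second moment of each. Write $\widehat{\nabla}f_{i,l}^t(x_i(t)) = g_{i,l}^t(x_i(t))K(r_i(t))$ where $g_{i,l}^t$ is the finite-difference quotient plus the noise $\xi_{i,l}(t)$. I would first split
\[
\widehat{\nabla}f_{i,l}^t - \nabla_l f_i^t
= \underbrace{\xi_{i,l}(t)K(r_i(t))}_{\text{noise term}}
+ \underbrace{\Big(\tfrac{f_i^t(x_i(t)+\gamma_t r_i(t)e_l)-f_i^t(x_i(t)-\gamma_t r_i(t)e_l)}{2\gamma_t}\Big)K(r_i(t)) - \nabla_l f_i^t}_{\text{estimation term}} ,
\]
then use $\|a+b\|^2\le 2\|a\|^2+2\|b\|^2$ (summing over $l$) to handle the two groups separately. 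For the noise term, independence of $r_i(t)$ and $\xi_{i,l}(t)$ together with Assumption \ref{as4} gives $\mathbb{E}[(\xi_{i,l}(t)K(r_i(t)))^2\mid\mathcal{F}_i^t] \le \sigma^2 \kappa$, and summing over the $m$ coordinates produces the $m\kappa\sigma^2$ contribution (with the constant $4$ coming from the factor of $2$ in the split and a further factor of $2$ from an inner split of the estimation term itself).

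For the estimation term I would further split it as
\[
\Big(\tfrac{f_i^t(x_i(t)+\gamma_t r_i(t)e_l)-f_i^t(x_i(t)-\gamma_t r_i(t)e_l)}{2\gamma_t} - \nabla_l f_i^t\, r_i(t)\Big)K(r_i(t))
\;+\; \nabla_l f_i^t\,\big(r_i(t)K(r_i(t)) - 1\big),
\]
using $\int r K(r)\,dr = 2$ and the uniform density $\tfrac12$ on $[-1,1]$ so that $\mathbb{E}[r_i(t)K(r_i(t))] = 1$ — though for a second-moment bound I do not even need the mean to vanish; I only need $\mathbb{E}[(r_i(t)K(r_i(t)))^2] \le$ something finite, which follows from $\kappa<\infty$ after noting $|r|\le 1$. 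The $L_0$-Lipschitz continuity of $\nabla f_i^t$ from \eqref{L_0} controls the first bracket: a mean-value / Taylor argument gives $\big|\tfrac{f_i^t(x+\gamma r e_l)-f_i^t(x-\gamma r e_l)}{2\gamma} - \nabla_l f_i^t(x)\,r\big| \le L_0 \gamma |r|^2/$(const), so squaring, multiplying by $K(r)^2$, taking expectation and summing over $l$ yields the $6m\kappa L_0^2\gamma_t^2$ term. The second bracket is bounded by $\|\nabla f_i^t\|^2 \le G^2$ times $\mathbb{E}[(rK(r)-1)^2]$, and expanding the square and using $\int rK(r)\,dr=2$, $\int K^2(r)\,dr = \kappa$ gives a bound of the form $G^2(c\,\kappa + 1)$; tracking the accumulated factors of $2$ from the two splits reproduces the $2G^2(6m\kappa+1)$ term in \eqref{le-hi1}.

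The main obstacle is bookkeeping rather than any deep idea: one must be careful about which quantities are $\mathcal{F}_i^t$-measurable (the point $x_i(t)$ is, but $r_i(t)$ and $\xi_{i,l}(t)$ are fresh randomness), so that conditional expectations factor correctly, and one must keep the constants consistent across the nested applications of $\|a+b\|^2\le 2\|a\|^2+2\|b\|^2$ so that the final coefficients match $6m\kappa L_0^2$, $4m\kappa$, and $2G^2(6m\kappa+1)$. A secondary subtlety is justifying the pointwise finite-difference estimate: since Assumption \ref{as3} only gives an $\epsilon$-Hölder (hence $C^2$ on the compact $\Omega$) function, one should argue that $x_i(t)\pm\gamma_t r_i(t)e_l$ stays in a region where the gradient Lipschitz bound \eqref{L_0} applies, or absorb any boundary discrepancy into the crude $G$-bound; either way the estimate survives. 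Everything else — linearity of conditional expectation, the uniform-density integrals $\int_{-1}^1 (\cdot)\,\tfrac12\,dr$, and the triangle inequality — is routine.
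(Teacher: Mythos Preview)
Your proposal is correct in substance but takes a different route from the paper. The paper does \emph{not} perform your three-way decomposition; instead it starts with the crude inequality
\[
\|\widehat{\nabla}f_i^t(x_i(t))-\nabla f_i^t(x_i(t))\|^2 \le 2\|\widehat{\nabla}f_i^t(x_i(t))\|^2 + 2\|\nabla f_i^t(x_i(t))\|^2,
\]
and then bounds $\|\widehat{\nabla}f_i^t\|^2=\|g_i^t\|^2K^2(r_i(t))$ directly by expanding the finite difference as $f_i^t(x\pm\gamma r e_l)-f_i^t(x)\mp\langle\nabla f_i^t(x),\gamma r e_l\rangle$ plus the linear part, using convexity and the $L_0$-Lipschitz gradient to control the remainder. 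Your decomposition into the noise piece $\xi_{i,l}K$, the centred finite-difference remainder, and the $\nabla_l f_i^t\,(rK(r)-1)$ correction is more refined: it isolates exactly where each constant comes from and, in fact, yields \emph{smaller} constants than the lemma (in particular your $G^2$ term carries no factor of $m$, because $\sum_l(\nabla_l f_i^t)^2=\|\nabla f_i^t\|^2\le G^2$, whereas the paper's cruder argument picks up an extra $m$). So your claim that the bookkeeping ``reproduces'' the coefficients $6m\kappa L_0^2$, $4m\kappa$, $2G^2(6m\kappa+1)$ is slightly off --- your constants will differ and be tighter --- but since the lemma is an upper bound, your sharper estimate a fortiori implies \eqref{le-hi1}. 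Either approach handles the measurability issues the same way (the fresh randomness $r_i(t),\xi_{i,l}(t)$ is independent of $\mathcal{F}_i^t$), and neither proof worries seriously about whether $x_i(t)\pm\gamma_t r_i(t)e_l$ leaves $\Omega$; the paper simply applies \eqref{L_0} as you anticipated.
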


\begin{proof}
	For any $l\in \{1,\cdots,m\}$, we have
	\begin{equation}	\label{le-hi3}
		\begin{aligned}
			&\big(f_i^t(x_i(t)+\gamma_t r_i(t) e_l)-f_i^t(x_i(t) - \gamma_t r_i(t) e_l)\big)^2 \\
			&\leq3\big(f_i^t(x_i(t)+\gamma_t r_i(t) e_l)-f_i^t(x_i(t))	\\
			&~~-\langle\nabla f_i^t(x_i(t)), \gamma_t r_i(t) e_l\rangle \big)^2	\\
			&~+3\big(f_i^t(x_i(t)-\gamma_t r_i(t) e_l)-f_i^t(x_i(t))  \\
			&~~-\langle\nabla f_i^t(x_i(t)), -\gamma_t r_i(t) e_l\rangle \big)^2	+12\langle\nabla f_i^t(x_i(t)), \gamma_t r_i(t) e_l\rangle^2	\\
			&\leq3\big(\langle \nabla f_i^t(x_i(t)+\gamma_t r_i(t) e_l),\gamma_t r_i(t) e_l\rangle	\\
			&~~-\langle\nabla f_i^t(x_i(t)), \gamma_t r_i(t) e_l\rangle \big)^2	\\
			&~+3\big(\langle \nabla f_i^t(x_i(t)-\gamma_t r_i(t) e_l), -\gamma_t r_i(t) e_l)\rangle  \\
			&~~-\langle\nabla f_i^t(x_i(t)), -\gamma_t r_i(t) e_l\rangle \big)^2	+12\langle\nabla f_i^t(x_i(t)), \gamma_t r_i(t) e_l\rangle^2	\\
			&\leq6L_0^2\|\gamma_t r_i(t) e_l\|^4 +12\langle\nabla f_i^t(x_i(t)), \gamma_t r_i(t) e_l\rangle^2
		\end{aligned}
	\end{equation}
	where the second inequality holds by using the convexity of $\nabla f_i^t(\cdot)$ and the third one is true due to Assumption \ref{as3}.
	Note that
	\begin{equation}	\label{le-hifinal}
		\begin{split}
			&\mathbb{E}[\|\widehat{\nabla}f_i^t(x_i(t)) -\nabla f_i^t(x_i(t))\|^2|\mathcal{F}_i^{t}] \\
			&\leq2\mathbb{E}[\|\widehat{\nabla}f_i^t(x_i(t)) \|^2|\mathcal{F}_i^{t}] +2\|\nabla f_i^t(x_i(t))\|^2\\
			&=2\mathbb{E}[\|g_{i}^t(x_i(t))\|^2K^2(r_i(t))|\mathcal{F}_i^{t}] +2\|\nabla f_i^t(x_i(t))\|^2	\\
			&\leq\frac{m}{\gamma_t^2}\mathbb{E}[(f_i^t(x_i(t)+h_t r_i(t) e_j)-f(x_i(t)-h_t r_i(t) e_j))^2	\\
			&~~~K^2(r_i(t))|\mathcal{F}_i^{t}] \\
			&~+4m\mathbb{E}[\xi_{i,l}^2(t) K^2(r_i(t))|\mathcal{F}_i^{t}]+2\|\nabla f_i^t(x_i(t))\|^2	\\
%			&\leq\frac{3m}{2\gamma_t^2}\mathbb{E}[6L^2\|\gamma_t r_i(t) e_l\|^4 K^2(r)]	+\frac{3m}{2\gamma_t^2}\mathbb{E}[2\sigma^2 K^2(r)]	\\
%			&~+\frac{3m}{2\gamma_t^2}\mathbb{E}[12\langle \widehat{\nabla}f_i^t(x_i(t)),\gamma_t r_i(t) e_l\rangle^2 K^2(r)]+2\|\nabla f_i^t(x_i(t))\|^2	\\
			&\leq\frac{6mL_0^2}{\gamma_t^2}\mathbb{E}[\|\gamma_t r_i(t) e_l\|^4 K^2(r_i(t))|\mathcal{F}_i^{t}]	\\
			&~~+4m\sigma^2\mathbb{E}[ K^2(r_i(t))|\mathcal{F}_i^{t}] +2\|\nabla f_i^t(x_i(t))\|^2	\\
			&~+\frac{12m}{\gamma_t^2}\mathbb{E}[\langle \nabla f_i^t(x_i(t)),\gamma_t r_i(t) e_l\rangle^2 K^2(r_i(t))|\mathcal{F}_i^{t}]	\\
			&\leq6m\kappa L_0^2\gamma_t^2 +4m\kappa\sigma^2+ 2G^2(6m\kappa +1)
		\end{split}
	\end{equation}
	where the fourth inequality results from the fact that $\int r^2K^2(r)dr\leq\int K^2(r)dr\equiv\kappa$.
\end{proof}

Next, the high probability bound on the difference between the real gradient and the clipped gradient is presented.
\begin{lemma}\label{le4}
	Under Assumption \ref{as2} and \ref{as3}, for $\delta\in(0,1)$, with probability at least $1-\delta$
	\begin{equation} \label{le4-1}
		\begin{aligned}
			 &\sum_{t=1}^{T}\langle \nabla f_i^t(x_i(t))-  \widetilde{\nabla}f_i^t(x_i(t)), y_i(t)-x^*(t) \rangle	\\
			 &\leq\frac{2B^2}{\sqrt{T}}\sum_{t=1}^{T}\alpha_t^2 + \sqrt{T}\ln\frac{1}{\delta} + m\kappa_\epsilon BH\sum_{t=1}^{T}\gamma_t^{\epsilon-1}\\
			 &~+B\sum_{t=1}^{T}\frac{4}{\alpha_t}\big(6m\kappa L_0^2\gamma_t^2 +4m\kappa\sigma^2+ 12m\kappa G^2 +2G^2)\big).
		\end{aligned}
	\end{equation}
\end{lemma}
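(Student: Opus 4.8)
The plan is to decompose the inner product into a ``bias'' part and a ``martingale'' part, then bound each. Write
$\nabla f_i^t(x_i(t))-\widetilde{\nabla}f_i^t(x_i(t)) = \big(\nabla f_i^t(x_i(t))-\mathbb{E}[\widehat{\nabla}f_i^t(x_i(t))\mid\mathcal{F}_i^t]\big) + \big(\mathbb{E}[\widehat{\nabla}f_i^t(x_i(t))\mid\mathcal{F}_i^t]-\widehat{\nabla}f_i^t(x_i(t))\big) + \big(\widehat{\nabla}f_i^t(x_i(t))-\widetilde{\nabla}f_i^t(x_i(t))\big)$.
The first bracket is controlled by Lemma \ref{esti-true}, giving a contribution of at most $\|y_i(t)-x^*(t)\|\cdot m\kappa_\epsilon H\gamma_t^{\epsilon-1}\le m\kappa_\epsilon BH\gamma_t^{\epsilon-1}$ after summing, using $\|y_i(t)-x^*(t)\|\le B$ (here $y_i(t)\in\Omega$ since it is a convex combination of the $x_j(t)\in\Omega$). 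The third bracket is the clipping error: by definition of \eqref{step2}, $\widehat{\nabla}f_i^t-\widetilde{\nabla}f_i^t$ is nonzero only when $\|\widehat{\nabla}f_i^t\|>\alpha_t$, and then it equals $\big(1-\alpha_t/\|\widehat{\nabla}f_i^t\|\big)\widehat{\nabla}f_i^t$, whose norm is $\|\widehat{\nabla}f_i^t\|-\alpha_t\le\|\widehat{\nabla}f_i^t\|\cdot\mathbf{1}\{\|\widehat{\nabla}f_i^t\|>\alpha_t\}$; bounding this via Markov/Chebyshev by $\|\widehat{\nabla}f_i^t\|^2/\alpha_t$ in conditional expectation and invoking Lemma \ref{le-hi} plus $\|\nabla f_i^t\|\le G$, one gets a per-step bound like $\tfrac{1}{\alpha_t}(6m\kappa L_0^2\gamma_t^2+4m\kappa\sigma^2+12m\kappa G^2+2G^2)$ in expectation; multiplying by $B$ and summing produces the last line of \eqref{le4-1}. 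A small subtlety: the clipping error needs to be handled partly in expectation and partly as its own deviation, but since the clipped quantity is bounded and the conditional expectation bound is what appears in the statement, the cleanest route is to add and subtract the conditional expectation of the clipping error as well, folding its mean into the deterministic sum and its fluctuation into the martingale.

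The second bracket, together with the mean-zero fluctuation of the clipping error, forms a martingale-difference sequence: letting $Z_t=\langle \mathbb{E}[\widehat{\nabla}f_i^t(x_i(t))\mid\mathcal{F}_i^t]-\widehat{\nabla}f_i^t(x_i(t)),\,y_i(t)-x^*(t)\rangle$ (plus the analogous centered clipping term), we have $\mathbb{E}[Z_t\mid\mathcal{F}_i^t]=0$ because $y_i(t)$ and $x^*(t)$ are $\mathcal{F}_i^t$-measurable. The plan is to apply a Bernstein-type / Freedman-type concentration inequality for martingales: the increments are not bounded almost surely (heavy-tailed noise), but the conditional second moments $\mathbb{E}[Z_t^2\mid\mathcal{F}_i^t]\le \|y_i(t)-x^*(t)\|^2\,\mathbb{E}[\|\widehat{\nabla}f_i^t-\mathbb{E}[\widehat{\nabla}f_i^t\mid\mathcal{F}_i^t]\|^2\mid\mathcal{F}_i^t]\le B^2\cdot(\text{Lemma \ref{le-hi} bound})$ are uniformly bounded by a constant. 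For the unclipped term one may alternatively clip inside the argument; but since the paper already carries the clipped iterate, the relevant martingale term whose increments are genuinely bounded by $O(\alpha_t B)$ is the one involving $\widetilde{\nabla}f_i^t$, and applying Azuma–Hoeffding (or a Freedman bound with predictable quadratic variation $\sum_t \alpha_t^2 B^2$) yields a tail of the form $\sqrt{\sum_t \alpha_t^2 B^2\,\ln(1/\delta)}$. Using the step-size regime $\alpha_t=t^a+2G$ with $a<1/2$ so that $\sum_{t=1}^T\alpha_t^2 = O(T)$, this is $O(\sqrt{T}\ln\frac1\delta)$; writing it crudely as $\tfrac{2B^2}{\sqrt T}\sum_{t=1}^T\alpha_t^2+\sqrt T\ln\tfrac1\delta$ via $\sqrt{ab}\le a+b$ gives exactly the first two terms of \eqref{le4-1}.

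Assembling: take a union bound over the (at most two) high-probability events, one gets that with probability at least $1-\delta$ the full sum is bounded by the sum of the four pieces, which is the claimed \eqref{le4-1}. The main obstacle is the concentration step for the heavy-tailed martingale: because $\widehat{\nabla}f_i^t$ has only a bounded second moment (Assumption \ref{as4} and Lemma \ref{le-hi}), one cannot directly use sub-Gaussian tail bounds, and this is precisely why the clipped gradient $\widetilde{\nabla}f_i^t$ (norm $\le\alpha_t$) is introduced — it makes the dominant martingale term's increments bounded by $2\alpha_t B$ while pushing the price of clipping into the deterministic $\tfrac1{\alpha_t}$-terms. Care is needed to ensure the bias introduced by clipping (the gap between $\mathbb{E}[\widetilde{\nabla}f_i^t\mid\mathcal{F}_i^t]$ and $\mathbb{E}[\widehat{\nabla}f_i^t\mid\mathcal{F}_i^t]$, equivalently between $\widetilde\nabla$ and $\nabla f_i^t$ in conditional mean) is itself correctly accounted for; this is handled by the $\|\widehat\nabla f_i^t\|^2/\alpha_t$ Markov bound combined with Lemma \ref{le-hi}, and it is why the constant $12m\kappa G^2+2G^2$ (rather than $6m\kappa G^2+\dots$ from Lemma \ref{le-hi} directly) shows up — the extra factor comes from bounding $\mathbb{E}[\|\widehat\nabla f_i^t\|^2\mid\mathcal{F}_i^t]\le 2\,\mathbb{E}[\|\widehat\nabla f_i^t-\nabla f_i^t\|^2\mid\mathcal{F}_i^t]+2G^2$.
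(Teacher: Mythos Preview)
Your proposal is correct and, once you carry out the add-and-subtract of the conditional mean of the clipping error that you describe, your three-term split collapses to exactly the paper's two-term decomposition $\nabla f_i^t-\widetilde\nabla f_i^t=(\nabla f_i^t-\mathbb{E}[\widetilde\nabla f_i^t\mid\mathcal F_i^t])+(\mathbb{E}[\widetilde\nabla f_i^t\mid\mathcal F_i^t]-\widetilde\nabla f_i^t)$, with the bounded-increment martingale part concentrated and the bias part controlled via Lemmas~\ref{esti-true} and~\ref{le-hi}. The only implementation differences are that the paper proves the concentration step by building an explicit exponential supermartingale from the elementary inequality $\exp(a)\le\exp(a^2)+a$ (rather than citing Azuma--Hoeffding and then applying AM--GM), and it bounds the clipping bias through the indicator comparison $\mathbf{1}\{\|\widehat\nabla f_i^t\|\ge\alpha_t\}\le\mathbf{1}\{\|\widehat\nabla f_i^t-\nabla f_i^t\|>\alpha_t/2\}$ (valid since $\alpha_t\ge 2G$) followed by H\"older's and Markov's inequalities --- this is where the factor $4/\alpha_t$ in front of the Lemma~\ref{le-hi} quantity actually comes from, not from the $\mathbb{E}\|\widehat\nabla\|^2\le 2\,\mathbb{E}\|\widehat\nabla-\nabla\|^2+2G^2$ step you suggest at the end.
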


\begin{proof}
	According to the inequality $\exp(a)\leq\exp(a^2)+a$ for any $a\in\mathbb{R}$, there holds
	\begin{equation}	\label{le6-1}
		\begin{aligned}
			&\exp\big(\frac{1}{\sqrt{T}}\langle \mathbb{E}[\widetilde{\nabla}f_i^t(x_i(t))|\mathcal{F}_i^{t}] -\widetilde{\nabla}f_i^t(x_i(t)), y_i(t)-x^*(t) \rangle \big)	\\
			&\leq\exp\big(\frac{1}{T}\langle \mathbb{E}[\widetilde{\nabla}f_i^t(x_i(t))|\mathcal{F}_i^{t}] -\widetilde{\nabla}f_i^t(x_i(t)),	\\
			&~~y_i(t)-x^*(t) \rangle^2 		\big)		\\
			&~+\frac{1}{\sqrt{T}}\langle \mathbb{E}[\widetilde{\nabla}f_i^t(x_i(t))|\mathcal{F}_i^{t}] -\widetilde{\nabla}f_i^t(x_i(t)), y_i(t)-x^*(t) \rangle .	\\
		\end{aligned}
	\end{equation}
	Taking expectations on both sides of (\ref{le6-1}) yields
	\begin{equation}\label{le6-2}
		\begin{aligned}
			&\mathbb{E}\big[\exp\big(\frac{1}{\sqrt{T}}\langle \mathbb{E}[\widetilde{\nabla}f_i^t(x_i(t))|\mathcal{F}_i^{t}] -\widetilde{\nabla}f_i^t(x_i(t)), \\
			&~~y_i(t)-x^*(t) \rangle \big)|\mathcal{F}_i^{t}\big]	\\
			&\leq\mathbb{E}\big[\exp\big(\frac{1}{T}\langle \mathbb{E}[\widetilde{\nabla}f_i^t(x_i(t))|\mathcal{F}_i^{t}] -\widetilde{\nabla}f_i^t(x_i(t)),	\\
			&~~y_i(t)-x^*(t) \rangle^2 		\big)|\mathcal{F}_i^{t}\big]		\\
%			&~+\mathbb{E}\big[\frac{1}{\sqrt{T}}\langle \mathbb{E}[\widetilde{\nabla}f_i^t(x_i(t))] -\widetilde{\nabla}f_i^t(x_i(t)), x_i(t)-x^*(t) \rangle\big]	\\
			&\leq\mathbb{E}\big[\exp\big(\frac{B^2}{T}( \| \mathbb{E}[\widetilde{\nabla}f_i^t(x_i(t))|\mathcal{F}_i^{t}] \|^2 \\
			&~~+ \| \widetilde{\nabla}f_i^t(x_i(t)) \|^2 )  \big)|\mathcal{F}_i^{t}\big]		\\
			&\leq\exp\left(\frac{2B^2}{T}\alpha_t^2\right)
		\end{aligned}
	\end{equation}
	where the second inequality holds by using the Cauchy-Schwarz inequality and Assumption \ref{as2}, and the last one results from $\| \widetilde{\nabla}f_i^t(\cdot) \|\leq\alpha_t$. Furthermore, we let
	\begin{equation*}
		\begin{split}
			\varphi(t)=\frac{1}{\sqrt{T}}\langle \mathbb{E}[\widetilde{\nabla}f_i^t(x_i(t))|\mathcal{F}_i^{t}] -\widetilde{\nabla}f_i^t(x_i(t)), y_i(t)-x^*(t) \rangle
		\end{split}
	\end{equation*}
	and consider the dynamics $\psi(t+1)=\exp\left(\varphi(t)-\frac{2B^2}{T}\alpha_t^2\right) \psi(t)$
%	\begin{equation*}
%		\begin{split}
%			\psi(t+1)=\exp\left(\varphi(t)-\frac{2B^2}{T}\alpha_t^2\right) \psi(t)
%		\end{split}
%	\end{equation*}
	with $\psi(1)=1$. It is easy to verify that $ \psi(t+1)=\exp(\sum_{k=1}^{t}(\varphi(k)-\frac{2B^2}{T}\alpha_t^2))$. It follows from (\ref{le6-2}) that $\mathbb{E}[\psi(t+1)]\leq\mathbb{E}[\psi(t)]$. Taking the total expectation results in that
	\begin{equation*}
		\begin{split}
			\mathbb{E}[\psi(t+1)]\leq\mathbb{E}[\psi(t)]\leq\cdots\leq\mathbb{E}[\psi(1)]=1.
		\end{split}
	\end{equation*}
	Therefore, for any $Q\geq0$, there is
	\begin{equation*}
		\begin{split}
			&\mathbb{P}\big[\sum_{t=1}^{T}\big(\varphi(t)-\frac{2B^2}{T}\alpha_t^2\big)\geq Q\big]		\\
			&=\mathbb{P}\big[\exp\big(\sum_{t=1}^{T}\big(\varphi(t)-\frac{2B^2}{T}\alpha_t^2\big)\big)\geq\exp(Q)\big]		\\
%			&=\mathbb{P}\left[\psi(T+1)\geq\exp(Q)\right]		\\
			&\leq\frac{\mathbb{E}(\psi(T+1))}{\exp(Q)}		\\
			&\leq\exp(-Q)
		\end{split}
	\end{equation*}
	where the first inequality results from the Markov's inequality. According to the arbitrariness of $Q$, letting $Q=\ln\frac{1}{\delta}$ yields that for any $i\in\mathcal{V}$ and $\delta\in(0,1)$, with probability at least $1-\delta$,
	\begin{equation}	\label{le6-3}
		\begin{split}
			&\sum_{t=1}^{T}\frac{1}{\sqrt{T}}\langle \mathbb{E}[\widetilde{\nabla}f_i^t(x_i(t))|\mathcal{F}_i^{t}] -\widetilde{\nabla}f_i^t(x_i(t)), y_i(t)-x^*(t) \rangle	\\
			&\leq\frac{2B^2}{T}\sum_{t=1}^{T}\alpha_t^2 + \ln\frac{1}{\delta}.
		\end{split}
	\end{equation}
	By the fact that $\|\nabla f_i^t(\cdot)\|\leq G\leq\frac{\alpha_t}{2}$, we have
	\begin{equation*}\label{le7-2}
		\begin{split}
			&\| \widehat{\nabla}f_i^t(x_i(t)) \| 	\\
			&\leq\| \widehat{\nabla}f_i^t(x_i(t))-\nabla f_i^t(x(t)) \| + \|\nabla f_i^t(x_i(t)) \|		\\
			&\leq\| \widehat{\nabla}f_i^t(x_i(t))-\nabla f_i^t(x_i(t)) \| + \frac{\alpha_t}{2}	.
		\end{split}
	\end{equation*}
	Indicator functions $\omega_t$ and $\varpi_t$ are defined respectively as follows
			$\omega_t = 1\big\{\|\widehat{\nabla}f_i^t(x_i(t))\|\geq\alpha_t\big\}$
	and
		$	\varpi_t = 1\big\{ \| \widehat{\nabla}f(x(t))-\nabla f(x(t))\|>\frac{\alpha_t}{2} \big\} $.
	From the definitions of $\omega_t$ and $\varpi_t$, it follows that $\omega_t\leq\varpi_t$.	
	By (\ref{step2}) and $\omega_t$, we have
	\begin{equation*}
		\begin{split}
			&\widetilde{\nabla}f_i^t(x_i(t))	\\
			&=\frac{\alpha_t}{\| \widehat{\nabla}f_i^t(x_i(t))\|} \widehat{\nabla}f_i^t(x_i(t))\omega_t + \widehat{\nabla}f_i^t(x_i(t))(1-\omega_t)	\\
			&=\big(\frac{\alpha_t}{\| \widehat{\nabla}f_i^t(x_i(t)) \|} -1\big) \widehat{\nabla}f_i^t(x_i(t))\omega_t + \widehat{\nabla}f_i^t(x_i(t)).
		\end{split}
	\end{equation*}
	Hence
	\begin{equation*}\label{le7-3}
		\begin{split}
			&\|\nabla f_i^t(x_i(t))-\mathbb{E}[\widetilde{\nabla}f_i^t(x_i(t))|\mathcal{F}_i^{t}] \|		\\
			&\leq\|\nabla f_i^t(x_i(t))-\mathbb{E}[\widehat{\nabla}f_i^t(x_i(t))|\mathcal{F}_i^{t}] \|	\\
			&~+\| \mathbb{E}\big[\big(\frac{\alpha_t}{\| \widehat{\nabla}f_i^t(x_i(t)) \|}-1 \big) \widehat{\nabla}f_i^t(x_i(t))\omega_t |\mathcal{F}_i^{t}\big]  \|		\\
			&\leq\mathbb{E}\big[ \|\widehat{\nabla}f_i^t(x_i(t)) \|\big|1-\frac{\alpha_t}{\| \widehat{\nabla}f_i^t(x_i(t)) \|}\big|\omega_t |\mathcal{F}_i^{t}\big] +m\kappa_\epsilon H\gamma_t^{\epsilon-1}	\\
			&\leq\mathbb{E}[\|\widehat{\nabla}f_i^t(x_i(t))\|\omega_t|\mathcal{F}_i^{t}]	 +m\kappa_\epsilon H\gamma_t^{\epsilon-1}	\\
			&\leq\mathbb{E}[\| \widehat{\nabla}f_i^t(x_i(t))\|\varpi_t|\mathcal{F}_i^{t}]	+m\kappa_\epsilon H\gamma_t^{\epsilon-1}	\\		
			&\leq\mathbb{E}[\| \widehat{\nabla}f_i^t(x_i(t))-\nabla f_i^t(x_i(t))\|\varpi_t|\mathcal{F}_i^{t}] 	\\	
			&~+ \mathbb{E}[\|\nabla f_i^t(x_i(t))\|\varpi_t|\mathcal{F}_i^{t}]  +m\kappa_\epsilon H\gamma_t^{\epsilon-1}	\\
			&\leq\mathbb{E}[\|\widehat{\nabla}f_i^t(x_i(t))-\nabla f_i^t(x_i(t))\|^2|\mathcal{F}_i^{t}]^{\frac{1}{2}} \mathbb{E}[\varpi_t^{2}]^{\frac{1}{2}} 	\\
			&~+\| \nabla f_i^t(x_i(t))\|\mathbb{E}[\varpi_t|\mathcal{F}_i^{t}] +m\kappa_\epsilon H\gamma_t^{\epsilon-1}	\\						
			&\leq\mathbb{E}[\|\widehat{\nabla}f_i^t(x_i(t))-\nabla f_i^t(x_i(t))\|^2|\mathcal{F}_i^{t}]^{\frac{1}{2}}\mathbb{E}[\varpi_t|\mathcal{F}_i^{t}]^{\frac{1}{2}} \\
			&~+ \frac{\alpha_t}{2}\mathbb{E}[\varpi_t|\mathcal{F}_i^{t}] +m\kappa_\epsilon H\gamma_t^{\epsilon-1}
		\end{split}
	\end{equation*}
	where the second inequality holds by using the Jensen's inequality and Lemma \ref{esti-true}, and the sixth one results from Hölder's inequality. Note that
	\begin{equation*}	\label{le7-4}
		\begin{split}
			\mathbb{E}[\varpi_t|\mathcal{F}_i^{t}] &= \mathbb{P}[\| \widehat{\nabla}f_i^t(x_i(t))-\nabla f_i^t(x_i(t)) \|\geq\frac{\alpha_t}{2}]	\\
			&\leq \frac{\mathbb{E}[\|\widehat{\nabla}f_i^t(x_i(t))-\nabla f_i^t(x_i(t))\|^2|\mathcal{F}_i^{t} ]}{(\alpha_t/2)^2}	
		\end{split}
	\end{equation*}
	where the first inequality holds by using the Markov's inequality.
	Moreover,
	\begin{equation}	\label{le7-5}
		\begin{split}
			&\sum_{t=1}^{T}\langle \nabla f_i^t(x_i(t))-\mathbb{E}[\widetilde{\nabla}f_i^t(x_i(t))|\mathcal{F}_i^{t}], y_i(t)-x^*(t) \rangle		\\
			&\leq B\sum_{t=1}^{T}\|\nabla f_i^t(x_i(t))-\mathbb{E}[\widetilde{\nabla}f_i^t(x_i(t))|\mathcal{F}_i^{t}] \|	\\
			&\leq B\sum_{t=1}^{T}\big(\frac{4}{\alpha_t}\mathbb{E}[\|\widehat{\nabla}f_i^t(x_i(t))-\nabla f_i^t(x_i(t))\|^2 |\mathcal{F}_i^{t}] 	\\
			&~~~+m\kappa_\epsilon H\gamma_t^{\epsilon-1}\big).
		\end{split}
	\end{equation}
	Combining (\ref{le-hi1}), (\ref{le6-3}), and (\ref{le7-5}) implies (\ref{le4-1}).
\end{proof}

In the following lemma, the consensus error bound is presented.
\begin{lemma} \label{le2}
	Under Assumption \ref{as1}, for any $i\in\mathcal{V}$,
	\begin{equation} \label{le3-01}
		\begin{split}
			\| x_i(t+1)-\bar{x}(t+1) \| \leq \theta_1\lambda^t +\theta_2\sum_{s=1}^{t}\alpha_{s}\beta_{s}\lambda^{t-s}
		\end{split}
	\end{equation}
	where $\bar{x}(t)=\frac{1}{n}\sum_{i=1}^{n}x_i(t)$, $\theta_1=\frac{\sqrt{nm}\mathcal{C}}{\lambda}\|x(1)\|$, and $\theta_2=\frac{\sqrt{nm}\mathcal{C}}{\mu\lambda}$.
\end{lemma}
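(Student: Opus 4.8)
The plan is to control the consensus error by expanding the update rule over time and using the doubly stochastic mixing together with the geometric contraction from Lemma~\ref{le1}. First I would write out the dynamics for the component $[x_i(t+1)]_k$ of each coordinate $k\in\{1,\dots,m\}$ separately. From \eqref{step3} and \eqref{step4}, the new state $x_i(t+1)$ is the Bregman projection onto $\Omega$ of a point obtained from $y_i(t)=\sum_{j}a_{ij}(t)x_j(t)$ perturbed by the step $\beta_t\widetilde{\nabla}f_i^t(x_i(t))$. The key structural fact I would establish (or invoke from the cited mirror-descent analyses) is that the deviation introduced by one mirror-descent step is bounded: $\|x_i(t+1)-y_i(t)\|\le \frac{\beta_t}{\mu}\|\widetilde{\nabla}f_i^t(x_i(t))\|\le \frac{\alpha_t\beta_t}{\mu}$, using the $\mu$-strong convexity of $\phi$ (so that $D_\phi(x,y)\ge\frac12\|x-y\|^2$) and the clipping bound $\|\widetilde{\nabla}f_i^t(\cdot)\|\le\alpha_t$ from \eqref{step2}. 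Write this one-step perturbation as $\zeta_i(t):=x_i(t+1)-\sum_j a_{ij}(t)x_j(t)$, so $\|\zeta_i(t)\|\le\alpha_t\beta_t/\mu$.

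Next I would unroll the recursion $x_i(t+1)=\sum_j a_{ij}(t)x_j(t)+\zeta_i(t)$ back to time $1$. Stacking the states and using the transition matrix $A(t,s)=A(t)\cdots A(s)$, this gives, coordinate-wise,
\begin{equation*}
	x_i(t+1)=\sum_{j=1}^{n}[A(t,1)]_{ij}x_j(1)+\sum_{s=1}^{t}\sum_{j=1}^{n}[A(t,s+1)]_{ij}\zeta_j(s)+\zeta_i(t),
\end{equation*}
with the convention that the innermost matrix is the identity when $s=t$. Since $A(t)$ is doubly stochastic, $\bar x(t+1)=\frac1n\sum_i x_i(t+1)=\frac1n\sum_j x_j(1)+\frac1n\sum_{s=1}^{t}\sum_j\zeta_j(s)$. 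Subtracting, each term $[A(t,s+1)]_{ij}$ gets replaced by $[A(t,s+1)]_{ij}-\frac1n$, and Lemma~\ref{le1} bounds this by $\mathcal{C}\lambda^{t-s-1}$ for $t\ge s+1$ (and the leftover $s=t$ term, where the gap is $a_{ii}(t)-\frac1n$ type, is absorbed into the constant or bounded trivially by $2$, which I would fold into the $\lambda^{t-s}$ factor by inflating $\mathcal{C}$). Then
\begin{equation*}
	\|x_i(t+1)-\bar x(t+1)\|\le \mathcal{C}\lambda^{t-1}\sum_{j}\|x_j(1)\|+\mathcal{C}\sum_{s=1}^{t}\lambda^{t-s-1}\sum_{j}\|\zeta_j(s)\|.
\end{equation*}
Using $\|\zeta_j(s)\|\le\alpha_s\beta_s/\mu$ and $\sum_j\|x_j(1)\|\le\sqrt{n}\,\|x(1)\|$ (treating $x(1)$ as the stacked vector, with an extra $\sqrt m$ if the Lemma~\ref{le1} bound is applied per coordinate and then recombined, which explains the $\sqrt{nm}$ in $\theta_1,\theta_2$), and shifting the exponent by one into the constant $\mathcal{C}/\lambda$, I arrive at \eqref{le3-01} with $\theta_1=\frac{\sqrt{nm}\,\mathcal{C}}{\lambda}\|x(1)\|$ and $\theta_2=\frac{\sqrt{nm}\,\mathcal{C}}{\mu\lambda}$.

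The main obstacle I anticipate is handling the Bregman-projection step cleanly: showing $\|x_i(t+1)-y_i(t)\|\le\frac{\beta_t}{\mu}\|\widetilde{\nabla}f_i^t(x_i(t))\|$ requires the first-order optimality condition for the argmin in \eqref{step4} together with the three-point identity for Bregman divergences and $\mu$-strong convexity; this is where Assumption~\ref{as5} and the strong-convexity lower bound $D_\phi(x,y)\ge\frac12\|x-y\|^2$ do their work. The other bookkeeping nuisance is the coordinate-wise application of Lemma~\ref{le1} (which is stated for scalar entries of $A(t,s)$) versus the vector-valued states, and correctly propagating the $\sqrt m$ factor and the off-by-one in the exponent of $\lambda$; these are routine but must be done carefully to match the stated constants. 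Everything else is a geometric-series estimate.
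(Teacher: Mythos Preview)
Your proposal is correct and follows essentially the same route as the paper's proof: obtain the one-step bound $\|x_i(t+1)-y_i(t)\|\le\alpha_t\beta_t/\mu$ from the first-order optimality condition of \eqref{step4} and $\mu$-strong convexity of $\phi$, unroll the perturbed consensus recursion, and apply Lemma~\ref{le1} together with double stochasticity. The only minor remark is that Assumption~\ref{as5} is not actually needed here---the step-size bound uses only the optimality condition and strong convexity (the paper handles the vector case via the Kronecker product $A(t)\otimes I_m$ rather than coordinate-wise, but that is purely notational).
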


\begin{proof}
	By (\ref{step4}), for any $x\in\Omega$,
	\begin{equation}	\label{le2-1}
		\begin{split}
			&\langle \beta_t \widetilde{\nabla} f_i^t(x_i(t)) +\nabla\phi(x_i(t+1))- \nabla\phi(y_i(t)),\\
			 &~~x-x_i(t+1)\rangle \geq 0.
		\end{split}
	\end{equation}
	Letting $x=y_i(t)$, we have
	\begin{equation}	\label{le2-2}
		\begin{split}
			&\langle \beta_t\widetilde{\nabla} f_i^t(x_i(t)), y_i(t)-x_i(t+1) \rangle\\
			&\geq\langle \nabla\phi(y_i(t))-\nabla\phi(x_i(t+1)), y_i(t)-x_i(t+1) \rangle\\
			&\geq\mu\|y_i(t)-x_i(t+1)\|^2
		\end{split}
	\end{equation}
	where the second inequality holds due to the $\mu$-strongly convexity of $\phi(\cdot)$. Applying the Cauchy–Schwarz inequality to (\ref{le2-2}) yields
	\begin{equation}	\label{le2-3}
		\begin{split}
			\|y_i(t)-x_i(t+1)\|\leq \frac{\beta_t}{\mu} \|\widetilde{\nabla} f_i^t(x_i(t))\|	\leq \frac{\alpha_t\beta_t}{\mu}	\\
		\end{split}
	\end{equation}
	where the second inequality is true due to the fact that $\|\widetilde{\nabla} f_i^t(\cdot)\|\leq\alpha_t$.
	Letting $z_i(t)=x_i(t+1)-y_i(t)$, by (\ref{step4}), we have
	\begin{equation*}	\label{le2-4}
		\begin{split}
			x_i(t+1)=\sum_{j\in\mathcal{N}_i}a_{ij}(t){x}_{j}(t)+z_i(t)
		\end{split}
	\end{equation*}
	Denote $x(t)=[(x_1(t))^{\top},\cdots,(x_n(t))^{\top}]^{\top}$ and $z(t)=[(z_1(t))^{\top},\cdots,(z_n(t))^{\top}]^{\top}$, one has
	\begin{equation}	\label{le2-5}
		\begin{aligned}
			&x(t+1) \\
			&=(A(t)\otimes I_m)x(t)+z(t)	\\
			&=(A(t:1)\otimes I_m)x(1)+\sum_{s=1}^{t-1}(A(t:s+1)\otimes I_m)z(s)+z(t)
		\end{aligned}
	\end{equation}	
	By the definition of $\bar{x}(t)$, it implies that
	\begin{equation}	\label{le2-6}
		\begin{split}
			\bar{x}(t+1)		&=\frac{1}{n}(1_n^{\top}\otimes I_m)x(t+1)	\\
			&=\frac{1}{n}(1_n^{\top}\otimes I_m)x(1)+\frac{1}{n}\sum_{s=1}^{t}(1_n^{\top}\otimes I_m)z(s).
		\end{split}
	\end{equation}
	Combining (\ref{le2-5}) and (\ref{le2-6}) results in that
%	\begin{equation}	\label{le2-7}
%		\begin{split}
%			&x_i(t+1)-\bar{x}(t+1)	\\
%			&=(([A(t:1)]_i-\frac{1_n^{\top}}{n})\otimes I_m)x(1)+((e_i^{\top}-\frac{1_n^{\top}}{n})\otimes I_m)z(t)	\\
%			&~~+\sum_{s=1}^{t-1}(([A(t:s+1)]_i-\frac{1_n^{\top}}{n})\otimes I_m)z(s).
%		\end{split}
%	\end{equation}
%	For any $t\geq1$, it follows from (\ref{le2-7}) that
	\begin{equation*}	\label{le2-8}
		\begin{split}
			&\|x_i(t+1)-\bar{x}(t+1)\| 	\\
			&\leq\|(([A(t:1)]_i-\frac{1_n^{\top}}{n})\otimes I_m)\|\|x(1)\| 	\\
			&~+\|((e_i^{\top}-\frac{1_n^{\top}}{n})\otimes I_m)\|\|z(t)\|	\\
			&~+\sum_{s=1}^{t-1}\|(([A(t:s+1)]_i-\frac{1_n^{\top}}{n})\otimes I_m)\|\|z(s)\|	\\
			&\leq\frac{\sqrt{nm}\mathcal{C}\lambda^t}{\lambda}\|x(1)\| +\frac{\sqrt{nm}\mathcal{C}}{\mu\lambda}\sum_{s=1}^{t}\alpha_{s}\beta_{s}\lambda^{t-s}
		\end{split}
	\end{equation*}
	where the last inequality holds by using Lemma \ref{le1} and (\ref{le2-3}).
\end{proof}

Based on the lemmas established above, now we present the proof of Theorem 1.

\textbf{Proof of Theorem 1.}
The convexity condition of $f_i^t(\cdot)$ implies
\begin{equation} \label{th1-1}
	\begin{aligned}
		&f_i^t(x_i(t))-f_i^t(x^*(t))	\\
		&\leq\langle\nabla f_i^t(x_i(t)),x_i(t)-x^*(t) \rangle	\\
		&=\langle \nabla f_i^t(x_i(t)),x_i(t)-y_i(t) \rangle+\langle \widetilde{\nabla}f_i^t(x_i(t)),y_i(t)-x^*(t)\rangle \\
		&~+\langle\nabla f_i^t(x_i(t))-\widetilde{\nabla}f_i^t(x_i(t)),y_i(t)-x^*(t) \rangle .	\\
	\end{aligned}
\end{equation}
For the first term of the right-hand side of (\ref{th1-1}), we have
\begin{equation} \label{th1-2}
	\begin{split}
		&\langle\nabla f_i^t(x_i(t)),x_i(t)-y_i(t) \rangle	\\
		&=\langle\nabla f_i^t(x_i(t)),x_i(t)-\bar{x}(t) \rangle+ \langle\nabla f_i^t(x_i(t)),\bar{x}(t)-y_i(t) \rangle	\\
		&=\langle \nabla f_i^t(x_i(t)),x_i(t)-\bar{x}(t) \rangle	\\
		&~+\sum_{j\in\mathcal{N}_i}a_{ij}(t)\langle\nabla f_i^t(x_i(t)),\bar{x}(t)-x_j(t) \rangle	\\
		&\leq G \|x_i(t)-\bar{x}(t)\| +G\sum_{j\in\mathcal{N}_i}a_{ij}(t)\|\bar{x}(t)-x_j(t)\|	\\
%		&\leq2\theta_1G\lambda^{t-1} +2\theta_2G\sum_{s=1}^{t-1}\alpha_{s}\beta_{s}\lambda^{t-1-s}
	\end{split}
\end{equation}
where the first inequality holds by using the Cauchy–Schwarz inequality.
Letting $x = x^*(t)$ in (\ref{le2-1}) yields
\begin{equation*}	\label{th1-4}
	\begin{split}
		&\langle \beta_{t}\widetilde{\nabla}f_i^t(x_i(t)),x_i(t+1)-x^*(t)\rangle \\
		&\leq\langle\nabla\phi(y_i(t)) -\nabla\phi(x_i(t+1)), x_i(t+1)-x^*(t)\rangle \\
		&=\mathcal{D}_{\phi}(x^*(t),y_i(t))-\mathcal{D}_{\phi}(x^*(t),x_i(t+1))	\\
		&~~-\mathcal{D}_{\phi}(x_i(t+1),y_i(t))	\\
%		&\leq\mathcal{D}_{\phi}(x^*(t),y_i(t))-\mathcal{D}_{\phi}(x^*(t),x_i(t+1))	\\
%		&~~-\frac{1}{2}\|x_i(t+1)-y_i(t)\|^2
	\end{split}
\end{equation*}
where the first equation results from the definition of $\mathcal{D}_{\phi}(\cdot,\cdot)$.
Hence, for the second term of the right-hand side of (\ref{th1-1}), one has
\begin{equation}	\label{th1-5}
	\begin{split}
		&\langle\beta_t\widetilde{\nabla}f_i^t(x_i(t)),y_i(t)-x^*(t)\rangle \\
		&=\langle\beta_t\widetilde{\nabla}f_i^t(x_i(t)),y_i(t)-x_i(t+1)\rangle	\\
		&~+\langle\beta_t\widetilde{\nabla}f_i^t(x_i(t)),x_i(t+1)-x^*(t)\rangle	\\
		&\leq\frac{\beta_t^2}{2}\|\widetilde{\nabla}f_i^t(x_i(t))\|^2 +\frac{1}{2}\|y_i(t)-x_i(t+1) \|^2	\\
		&~+\langle\beta_t\widetilde{\nabla}f_i^t(x_i(t)),x_i(t+1)-x^*(t)\rangle	\\
		&\leq \frac{\alpha_t^2\beta_t^2}{2} +\mathcal{D}_{\phi}(x^*(t),y_i(t))-\mathcal{D}_{\phi}(x^*(t),x_i(t+1))
	\end{split}
\end{equation}
where the first inequality follows from Young's inequality.
Furthermore,
\begin{equation*} \label{th1-6}
	\begin{split}
		&\sum_{t=1}^{T}\sum_{i=1}^{n}\frac{\mathcal{D}_{\phi}(x^*(t),y_i(t))-\mathcal{D}_{\phi}(x^*(t),x_i(t+1))}{\beta_t}	\\
		&=\sum_{t=1}^{T}\sum_{i=1}^{n}\big(\frac{\mathcal{D}_{\phi}(x^*(t),y_i(t))}{\beta_t} -\frac{\mathcal{D}_{\phi}(x^*(t+1),y_i(t+1))}{\beta_{t+1}}\big)	\\
		&~+\sum_{t=1}^{T}\sum_{i=1}^{n}\frac{\mathcal{D}_{\phi}(x^*(t+1),y_i(t+1))-\mathcal{D}_{\phi}(x^*(t),y_i(t+1))}{\beta_{t+1}}	\\
		&~+\sum_{t=1}^{T}\sum_{i=1}^{n}\frac{\mathcal{D}_{\phi}(x^*(t),y_i(t+1))-\mathcal{D}_{\phi}(x^*(t),x_i(t+1))}{\beta_{t+1}}	\\
	\end{split}
\end{equation*}
\begin{equation*} \label{th1-6}
	\begin{split}
		&~+\sum_{t=1}^{T}\sum_{i=1}^{n}\big(\frac{1}{\beta_{t+1}} -\frac{1}{\beta_{t}}\big)\mathcal{D}_{\phi}(x^*(t),x_i(t+1))	\\
		&\leq\frac{nBL_1}{\beta_1}  +\sum_{t=1}^{T}\sum_{i=1}^{n}\frac{L_1\|x^*(t+1)-x^*(t)\|}{\beta_{t+1}}	\\
		&~+\sum_{t=1}^{T}\sum_{i=1}^{n}\frac{\mathcal{D}_{\phi}(x^*(t),y_i(t+1))-\mathcal{D}_{\phi}(x^*(t),x_i(t+1))}{\beta_{t+1}}	\\
		&~+\big(\frac{1}{\beta_{T+1}}-\frac{1}{\beta_1}\big)nBL_1
	\end{split}
\end{equation*}
where the first inequality hold by Assumptions \ref{as2} and \ref{as5}.
Note that
\begin{equation*}	\label{th1-7}
	\begin{split}
		&\sum_{i=1}^{n}\mathcal{D}_{\phi}(x^*(t),y_i(t+1))-\sum_{i=1}^{n}\mathcal{D}_{\phi}(x^*(t),x_i(t+1))	\\
		&=\sum_{i=1}^{n}\mathcal{D}_{\phi}(x^*(t),\sum_{j=1}^{n}a_{ij}(t)x_j(t+1))	\\
		&~-\sum_{i=1}^{n}\mathcal{D}_{\phi}(x^*(t),x_i(t+1))	\\
		&\leq\sum_{j=1}^{n}\mathcal{D}_{\phi}(x^*(t),x_j(t+1))-\sum_{i=1}^{n}\mathcal{D}_{\phi}(x^*(t),x_i(t+1))	\\
%		&=0
	\end{split}
\end{equation*}
where the first equation holds by using (\ref{step4}).
According to the definition of $\mathcal{R}_i^d(T)$ in (\ref{regret}), we have
\begin{equation} \label{th1-8}
	\begin{split}
		&f^t(x_i(t))-f^t(x^*(t))	\\
		&=\frac{1}{n}\sum_{j=1}^{n}\big( f_j^t(x_i(t)) - f_j^t(\bar{x}(t))\big) \\
		&~+ \frac{1}{n}\sum_{j=1}^{n}\big( f_j^t(\bar{x}(t)) - f_j^t(x_j(t))\big)	\\
		&~+ \frac{1}{n}\sum_{j=1}^{n}\big( f_j^t(x_j(t)) - f_j^t(x^*(t))\big)	\\
		&\leq \frac{1}{n}\sum_{j=1}^{n}G\|x_i(t)-\bar{x}(t)\| +\frac{1}{n}\sum_{j=1}^{n}G\|\bar{x}(t)-x_j(t)\|	\\
		&~+ \frac{1}{n}\sum_{j=1}^{n}\big( f_j^t(x_j(t)) - f_j^t(x^*(t))\big)	\\
	\end{split}
\end{equation}
where the first inequality holds due to the Lipschitz continuity of the function.
Then, substituting (\ref{th1-1})-(\ref{th1-5}) into (\ref{th1-8}) yields
\begin{equation} \label{th-regret}
	\begin{aligned}
		&\mathcal{R}_i^d(T)	\leq\frac{4G}{n}\sum_{t=1}^{T}\sum_{i=1}^{n}\| x_i(t) - \bar{x}(t) \|	+\frac{1}{2}\sum_{t=1}^{T}\alpha_t^2\beta_t	\\
		& ~+\frac{BL_1}{\beta_{T+1}} +\frac{L_1\Xi_T}{\beta_{T+1}}	\\
		&~+\frac{1}{n}\sum_{t=1}^{T}\sum_{i=1}^{n}\langle\nabla f_i^t(x_i(t))-\widetilde{\nabla}f_i^t(x_i(t)),x_i(t+1)-x^*(t) \rangle .\\
	\end{aligned}
\end{equation}
Using Lemmas \ref{le4} and \ref{le2}, inequality (\ref{th-regret}) immediately implies (\ref{th11}).
This completes the proof.

\section{A simulation example} \label{se4}
%In this section, we provide an example to demonstrate the effectiveness of Algorithm 1.
Consider a network consisting of six sensors, whose goal are to cooperatively estimate a moving target \cite{7399359}. Sensors communicate with their neighbors via a time-varying digraph, as shown in Fig.~\ref{digraph}. Here each sensor only has access to its own function value and the state information of its neighbors.
To achieve the least-squares estimation of the target position, the sensors collaboratively solve the following distributed optimization problem:
\begin{equation*}
	\begin{split}
		&\min_{{x}\in \mathbb{R}}\frac{1}{n}\sum_{i=1}^n f_i^t(x), ~~f_i^t(x)=\frac{1}{2}| y_i(t) - M_i x |^2 		\\
		&\textrm{subject to}~~{x} \in \{x~|~|x|\leq 5\}
	\end{split}
\end{equation*}
where $y_i(t)=M_iz(t)+e_i(t)$ denotes the measurement of sensor $i$,
$M_i$ represents the observation parameter of sensor $i$, $e_i(t)$ represents the adverse noise  of sensor $i$ following an $F$-distribution with a probability density function $f(x;3,5)$, and $z(t)$ represents the target position defined as $z(t)= 0.2z(t-1) + 0.5\cos(t/60) + 0.5$. Here we assume that $M_1=0.5$, $M_2=0.1$, $M_3=2$, $M_4=1$, $M_5=1.2$, $M_6=1.8$.

Algorithm 1 is employed to address the problem.
The step sizes are set as $\alpha_t = 0.2(t+1)^{0.3}+2$, $\beta_{t}=15(t+1)^{-0.6}$, and $\gamma_t=0.2(t+1)^{-0.25}$. The kernel function is defined as $K(r)=\frac{15r}{4}(5-7r^2)$.
By running Algorithm 1 in one round, the trajectories of the target's state and the average state of all sensors are shown in Fig.~\ref{fig2}, where the state of the target is depicted in blue and the average state of all sensors is depicted in orange.
While the bounds of the dynamic regrets are shown in Fig.~\ref{fig3}.
From Fig.~\ref{fig2}, we see that the average state of all sensors approximates to $z(t)$.
Based on Fig.~\ref{fig3}, we can see that $\mathcal{R}_i^d(t)/t$ decays, so $\mathcal{R}_i^d(t)$ grows sublinearly.
The observations are consistent with the results established in Theorem 1. Thus, the effectiveness of Algorithm 1 is further verified.
\begin{figure}
	\centering
	\includegraphics[width=0.4\textwidth]{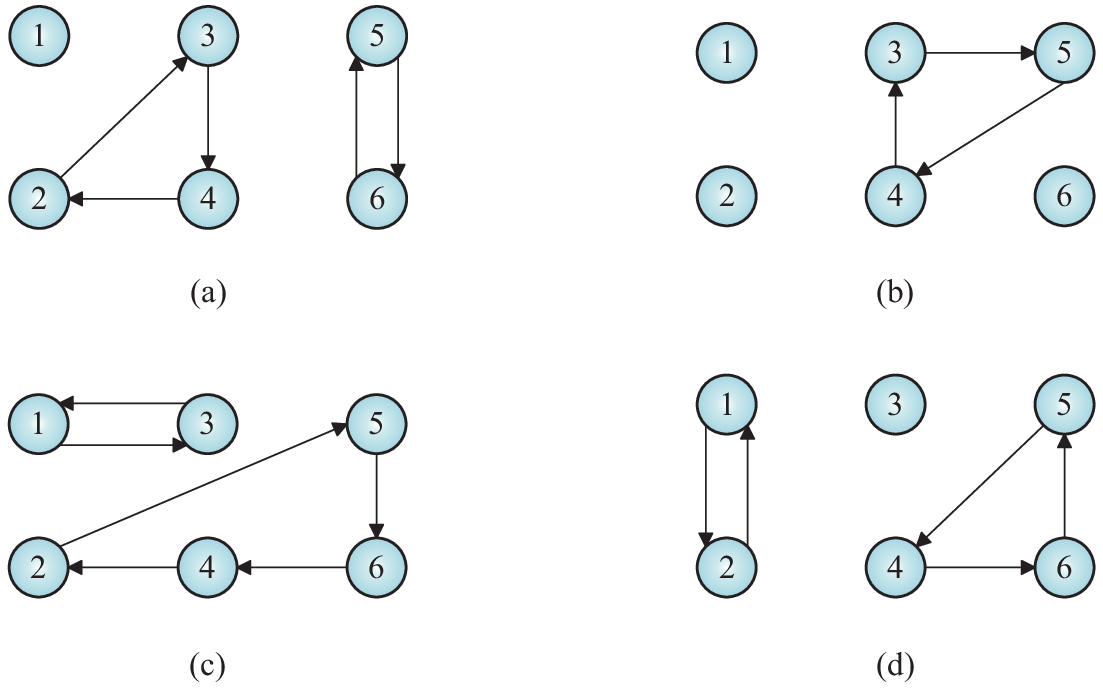}
	\caption{$4$-strongly connected graph. The switching order is given by (a)$\to$(b)$\to$(c)$\to$(d)$\to$(a)$\to\dots$}	\label{digraph}
\end{figure}

%\begin{figure}
%	\centering
%	\includegraphics[width=0.4\textwidth]{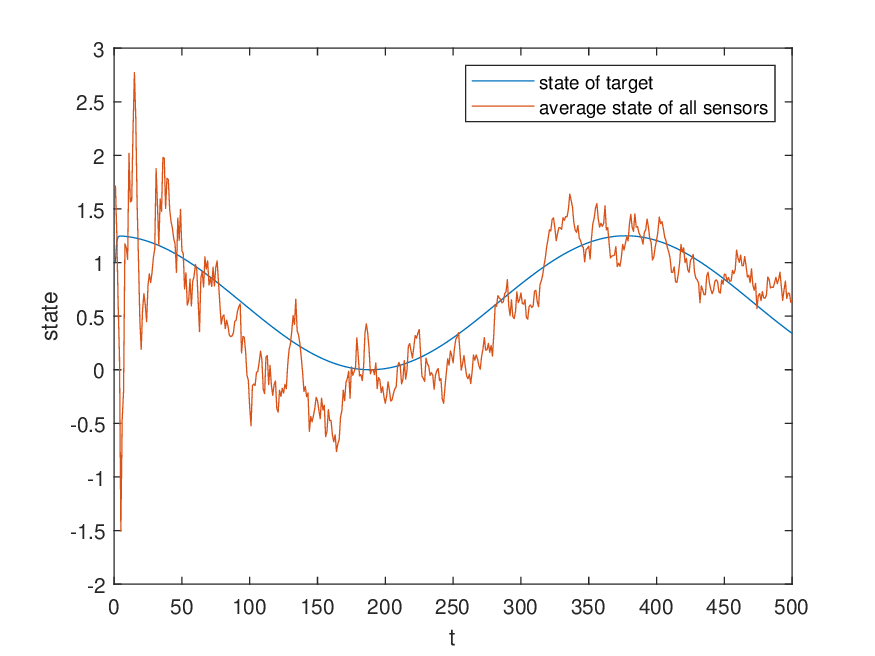}
%	\caption{The state of the target and the average state of all sensors.}	\label{fig2}
%\end{figure}
%\begin{figure}
%	\centering
%	\includegraphics[width=0.4\textwidth]{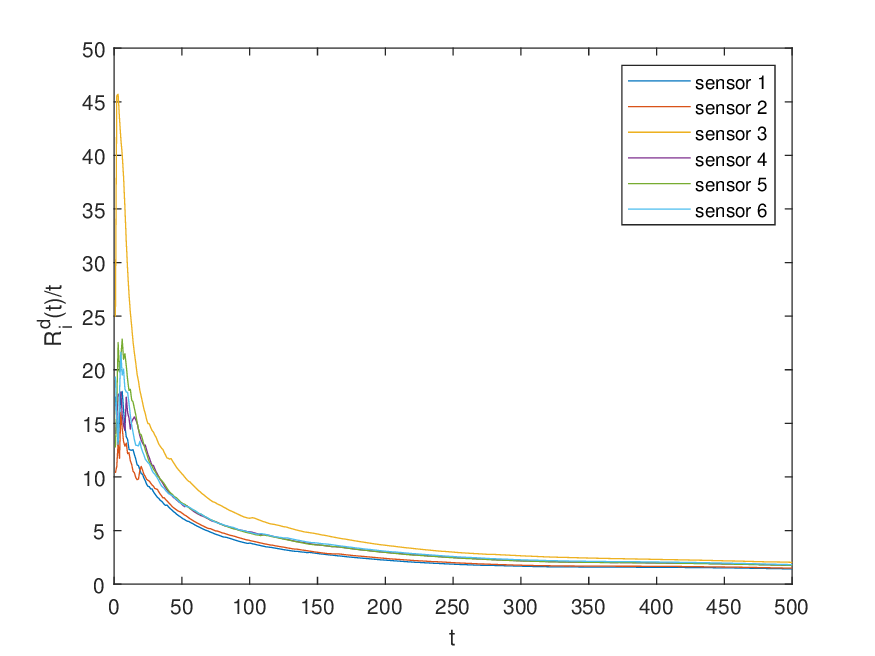}
%	\caption{The trajectory of $\mathcal{R}_i^d(t)/t$ under Algorithm 1.}	\label{fig3}
%\end{figure}

\begin{figure}
	\begin{minipage}[t]{0.48\linewidth}
		\centering
		\includegraphics[width=\linewidth]{trajectory.eps}
		\caption{The state of the target and the average state of all sensors.}
		\label{fig2}
	\end{minipage}
	\begin{minipage}[t]{0.48\linewidth}
		\centering
		\includegraphics[width=\linewidth]{regret.eps}
		\caption{The trajectory of $\mathcal{R}_i^d(t)/t$ under Algorithm 1.}
		\label{fig3}
	\end{minipage}
\end{figure}

\section{Conclusions} \label{se5}
In this paper, the problem of online distributed zeroth-order optimization with non-zero-mean adverse noise has been studied.
Each agent only has access to an estimate of the real gradient by the kernel function-based estimator and exchanges local information with its neighbors via a time-varying digraph.
To address this problem, we propose an online distributed zeroth-order mirror descent algorithm involving the kernel function-based estimator and the clipped strategy.
Under the algorithm, the high probability bound of the dynamic regrets is analyzed.
The results show that, if the graph is uniformly strongly connected and if the variation in the optimal point sequence grows at a certain rate, then the high probability of the dynamic regret increases sublinearly.
In our future work, we will also consider several interesting topics, such as the cases with nonconvex objective functions and inequality constraints, which will bring
new challenges to online distributed zeroth-order optimization with non-zero-mean adverse noises.

\bibliographystyle{unsrt}

\bibliography{cited}

\end{document}